\theoremstyle{definition}
\newtheorem{example}{Example}[section]
\newtheorem{definition}{Definition}[section]
\theoremstyle{plain}
\newtheorem{theorem}{Theorem}[section]
\newtheorem{lemma}[theorem]{Lemma}
\newenvironment{keywords}{%
\begin{changemargin}{1cm}{1cm}
\noindent{\bf Keywords:}
}
{\end{changemargin} }
\newenvironment{changemargin}[2]{%
\begin{list}{}{%
\setlength{\topsep}{0pt}%
\setlength{\leftmargin}{#1}%
\setlength{\rightmargin}{#2}%
\setlength{\listparindent}{\parindent}%
\setlength{\itemindent}{\parindent}%
\setlength{\parsep}{\parskip}%
}%
\item[]}{\end{list}}
\newcommand*\xbar[1]{%
  \hbox{%
    \vbox{%
      \hrule height 0.5pt 
      \kern0.5ex
      \hbox{%
        \kern-0.25em
        \ensuremath{#1}%
        \kern-0.1em
      }%
    }%
  }%
}
\newcommand*\xtilde[1]{%
  \hbox{%
    \vbox{%
      \hrule height 0.5pt 
      \kern0.5ex
      \hbox{%
        \kern-0.25em
        \ensuremath{#1}%
        \kern-0.1em
      }%
    }%
  }%
}
\begin{document}

\title{On the closure of relational models}

\author{Anna Klimova \\ 
{\small{Institute of Science and Technology (IST) Austria}} \\
{\small \texttt{aklimova25@gmail.com}}\\
{}\\
\and Tam\'{a}s Rudas \\
{\small{E\"{o}tv\"{o}s Lor\'{a}nd University, Budapest, Hungary}} \\
{\small{\texttt{rudas@tarki.hu}}}}

\date{}

\maketitle

\begin{abstract}
\noindent Relational models for contingency tables are generalizations of log-linear
models, allowing effects associated with arbitrary subsets of cells in a possibly incomplete  table, and not necessarily containing the overall effect. In this generality, the MLEs under Poisson and multinomial sampling are not always identical. This paper deals with the theory of maximum likelihood estimation in the case when there are observed zeros in the data. A unique MLE to such data is shown to always exist in the set of pointwise limits of sequences of distributions in the original model. This set is equal to the closure of the original model with respect to the Bregman information divergence. The same variant of iterative scaling may be used to compute the MLE in the original model and in its closure. 
\end{abstract}

\begin{keywords}
algebraic variety, Bregman divergence, contingency table, extended MLE, iterative scaling, relational model
\end{keywords}

\section{Introduction}

The existence of maximum likelihood estimates under log-linear models for contingency tables has been thoroughly studied, see \cite{Haberman}, \cite{Andersen74}, \cite{Barndorff1978}, \cite{Lauritzen}, among others.  It was established that the maximum likelihood estimates of the cell parameters always exist if the observed table has only positive cell counts, and may exist if some of the observed counts are zero. The patterns of zero cells that lead to the non-existence of the MLE were described in several forms \citep[cf.][]{Haberman, FienbergRinaldo2012}.

Within the extended log-linear model class all data sets have an MLE, irrespective of the pattern of zeros. An extended log-linear model may be obtained as the closure of the original model in the topology of pointwise convergence \citep[cf.][]{Lauritzen}, or the closure with respect to the Kullback-Leibler divergence \citep[cf.][]{CsiszarMatus2003}, or as the aggregate exponential family \citep{BrownBook}.

The contribution of this paper is motivated by statistical problems in which models more general than log-linear need to be considered. To illustrate, suppose that the management of a large supermarket classifies all goods on stock into one of three mutually exclusive and exhaustive categories, say, food ($F$), non-food household ($N$) and other ($O$), and wishes to study how the daily sales of each group are related. This is a standard task in market basket analysis \citep[cf.][]{Brin1997}. The first model of interest, routinely, is independence, but the usual model of independence of the three indicator variables is not applicable in this case: if $p_F$, $p_N$ and $p_O$ denote the probabilities that a purchase (a basket) contains an item from the $F$, $N$ and $O$ groups, then the probability of an empty purchase would be $(1-p_F)(1-p_N)(1-p_O)$, which has to be positive, in spite of the fact that there are no purchases which do not contain any items.

One alternative independence concept to apply is the AS-independence of the three variables \citep{AitchSilvey60}. The indicator variables $F$, $N$, and $O$ are said to be AS-independent if
\begin{equation}\label{ASdef}
p_{FN} = p_Fp_N, \quad p_{FO} = p_Fp_O, \quad p_{NO} = p_Np_O, \quad p_{FNO} = p_Fp_Np_O.
\end{equation}
Relational models introduced by \cite*{KRD11} contain model (\ref{ASdef}) and many other models of association.

A relational model  on a contingency table is generated by a class of non-empty subsets of cells and can be specified in the form: 
\begin{equation}\label{genll}
\log \boldsymbol{\delta} = \textbf{A}' \boldsymbol{\beta}.
\end{equation}
Here, $\boldsymbol \delta$ denotes the vector of cell parameters, probabilities or intensities, and $\mathbf{A}$ is the 0-1 matrix whose rows are the indicators of generating subsets. A hierarchical log-linear model \citep*[cf.][]{BFH} applies to a table which is a Cartesian product, and the model is generated by a collection of cylinder sets corresponding to marginals of the table and thus is a special case of a relational model. If the row space of $\mathbf{A}$ contains the vector $\boldsymbol 1' = (1, \dots, 1)$, as in the case of hierarchical log-linear models, then the model is said to include the overall effect. A model with the overall effect can be parameterized to include a common parameter in every cell, often called the normalizing constant. The models without the overall effect cannot be parameterized in such a way.  The peculiar property of relational models without the overall effect is that models for probabilities (appropriate under multinomial sampling) and models for intensities (appropriate for Poisson sampling) are different and lead to different MLEs. Let $\boldsymbol y$ denote the observed frequency distribution. Then, when the overall effect is not present, the MLE for probabilities does not preserve the sufficient statistics $\mathbf{A} \boldsymbol y$, and, for intensities, it does not preserve the observed total $\boldsymbol 1'\boldsymbol y$, see Example \ref{AS3ex}.

An iterative scaling procedure based on Bregman divergence can be used to compute the MLE under relational models \citep{KRipf1}. The Bregman divergence between two distributions is a generalization of the Kullback-Leibler divergence, but, unlike the latter, stays non-negative whether or not the two distributions have the same total. This property is essential for relational models for intensities without the overall effect as these models may include distributions with different totals.

If the observed frequencies are positive and the model matrix is of full row rank, the MLE under relational models can be computed using algorithms  for convex optimization \citep[cf.][]{BertsekasNLP, AitchSilvey60, EvansForcina11} or the Newton-Raphson algorithm.  A detailed discussion of the relative advantages and disadvantages of variants of iterative proportional fitting was given in \cite{KRipf1}. The contribution of the present paper is the investigation of cases when there are observed zero frequencies in the data, and of the closure of relational models under which such data will always admit an MLE. Of course, if only three groups of goods, as in the example above, are investigated, one cannot expect to see an observed zero, but if $1000$ groups of goods are investigated, out of the resulting $2^{1000} - 1$  groups, many will be empty. As it turns out, the pattern of observed zeros has far reaching implications on the existence and kind of MLE obtained.

A necessary and sufficient condition for the existence of the maximum likelihood estimates of the cell parameters under relational models is obtained in Section \ref{RMmlePositive}. The MLE for $\boldsymbol y$ exists if and only if there is a positive vector $\boldsymbol z$ such that $\mathbf{A} \boldsymbol z = \mathbf{A} \boldsymbol y$. This is literally the same condition as the one that applies to log-linear models. 

In Section \ref{SecClosure}, extended relational models are studied. The extended relational model is defined as the set of distributions parameterized by the elements of an algebraic variety associated with the model matrix of the original relational model. It is shown that this set is equal to the closure of the original model with respect to both the pointwise convergence and the Bregman divergence.

In Section \ref{MLEsection}, a polyhedral condition for the existence of the MLE in the original or the extended relational model is formulated.  If the vector of the sufficient statistics, $\mathbf{A} \boldsymbol y$, of the observed distribution is not contained in any of the faces of the polyhedral cone associated with the model matrix, the MLE exists in the original model, and otherwise, it does in the extended model. This condition is the same as for the log-linear case, but the proof is very different. The multiplicative representation of the distributions in the extended model and the existence of the MLEs of the model parameters are also discussed in this section.  Finally, the generalized iterative proportional fitting procedure suggested in \cite{KRipf1} is extended to the case of observed zeros. 

While the conditions of the existence of the MLE in the generality considered in this paper may be formulated to coincide with the known conditions for the case of log-linear models,  the proofs turn out to be more involved. Also, the algorithm to obtain that the MLEs is more complex. The additional complications come from properties of the MLE when the overall effect is not present. In fact, \citet[p.75]{Lauritzen}  mentioned the existence of models without the overall effect, which he called the ``constant function'', but to avoid difficulties did not consider them.  On the other hand, such models have been used in practice, see references in \cite{KRD11, KRipf1}.

\nocite{CsiszarMatus2003}
\nocite{CsiszarMatus2008}

\section{MLE under relational models} \label{RMmlePositive}

Let $Y_1, \dots, Y_K$ be discrete random variables with finite ranges, and the vector $\mathcal{I}$ of length $|\mathcal{I}|$ be their joint sample space. Here, $\mathcal{I}$ may also be a proper subset of the Cartesian product of the ranges of the variables. A distribution on $\mathcal{I}$ is parameterized by the cell parameters $\boldsymbol \delta =\{\delta_i,\,\,\mbox{for } i \in \mathcal{I}\}$, and, to simplify notation, is identified with $\boldsymbol \delta$. The components of   $\boldsymbol \delta$ are either  probabilities: $\delta_i \equiv p_i \in (0,1)$, with $\sum_{i \in \mathcal{I}} p_i = 1$,  or intensities: $\delta_i \equiv \lambda_i > 0$, for all $i \in \mathcal{I}$. Let $\mathcal{P}$ denote the set of positive distributions, $\boldsymbol \delta > \boldsymbol 0$, on $\mathcal{I}$. 

Let $\mathbf{A}$ be a 0-1 matrix of size $J \times |\mathcal{I}|$, which is interpreted as the indicator matrix of $J$ subsets generating the model. Assume that $\mathbf{A}$ has no zero column. A relational model $RM_{\boldsymbol \delta}(\mathbf{A})$ is the following set of distributions:
\begin{equation} \label{RMexpF}
RM_{\boldsymbol \delta}(\mathbf{A}) = \{ \boldsymbol \delta \in \mathcal{P}: \,\,\delta_i = \prod_{j=1}^J \theta_j^{a_{ji}}, \, i \in \mathcal{I}, \, \mbox{ for some } \, \boldsymbol \theta \in \mathbb{R}^J_{>0}\},
\end{equation}
where $\boldsymbol \theta =(\theta_1, \dots, \theta_J) \in \mathbb{R}_{>0}^J$ denotes the vector of parameters associated with the generating subsets. Under the model, the cell parameters are equal to the products of the parameters $\boldsymbol \theta$ corresponding to the subsets to which the cell belongs. In the sequel, the components of $\boldsymbol \theta$ are referred to as the multiplicative parameters, and  $\mathbf{A}$ is assumed to be of full row rank. In fact, the model $RM_{\boldsymbol \delta}(\mathbf{A})$ is uniquely determined by the row space of its model matrix, $\mathcal{R}(\mathbf{A})$. Relational models for which $\boldsymbol 1' \in \mathcal{R}(\mathbf{A})$ are said to include the overall effect. 

A dual representation of a relational model $RM_{\boldsymbol \delta}(\mathbf{A})$ can be obtained using the kernel basis matrix $\mathbf{D}$, whose rows, $\boldsymbol d_1, \dots, \boldsymbol d_K$,  are a basis of $Ker(\mathbf{A})$. In this representation, any distribution in the model  satisfies
\begin{equation} \label{Dual}
\mathbf{D} \mbox{log } \boldsymbol \delta = \boldsymbol 0,
\end{equation}
which can be re-written using the generalized odds ratios:
\begin{equation}\label{pRatios}
\boldsymbol \delta^{\boldsymbol d_1^+} / \boldsymbol \delta^{\boldsymbol d_1^-} = 1, \quad 
\boldsymbol \delta^{\boldsymbol d_2^+} / \boldsymbol \delta^{\boldsymbol d_2^-} = 1, \quad
\cdots \quad
\boldsymbol \delta^{\boldsymbol d_K^+} / \boldsymbol \delta^{\boldsymbol d_K^-} = 1,
\end{equation}
or using the cross-product differences:
\begin{equation}\label{pDiff}
\boldsymbol \delta^{\boldsymbol d_1^+} - \boldsymbol \delta^{\boldsymbol d_1^-} = 0, \quad 
\boldsymbol \delta^{\boldsymbol d_2^+} - \boldsymbol \delta^{\boldsymbol d_2^-} = 0, \quad
\cdots \quad
\boldsymbol \delta^{\boldsymbol d_K^+} - \boldsymbol \delta^{\boldsymbol d_K^-} = 0,
\end{equation}
where, ${\boldsymbol {d^+}}$ and ${\boldsymbol {d^-}}$ denote, respectively, the positive and negative parts of a vector $\boldsymbol d$ \citep{KRD11}.  

The properties of the maximum likelihood estimates under relational models are reviewed next. Let $\mathbf{Y} = (Y_1, \dots, Y_K)$ be a random variable that has a multivariate Poisson distribution parameterized by $\boldsymbol \delta \equiv \boldsymbol \lambda$ or a multinomial distribution parameterized by $N$ and $\boldsymbol \delta \equiv \boldsymbol p$. Let $\boldsymbol y$ be a realization of $\mathbf{Y}$, and \begin{equation}\label{b}
\boldsymbol q = \left\{\begin{array}{ll} \boldsymbol y, & \mbox{if } \boldsymbol \delta \equiv \boldsymbol \lambda, \\
\boldsymbol y/ (\boldsymbol 1'\boldsymbol y), & \mbox{if } \boldsymbol \delta \equiv \boldsymbol p.
\end{array}\right.
\end{equation}

If the MLE $\hat{\boldsymbol \delta}_{\boldsymbol y}$ of the cell parameters under the model $RM_{\boldsymbol \delta}(\mathbf{A})$ exists, it is the unique solution to the system of equations:
\begin{align}\label{MLEsys}
&\mathbf{A}\boldsymbol{\delta} = \gamma \mathbf{A} \boldsymbol q, \nonumber \\
&\mathbf{D} \mbox{log } \boldsymbol{\delta} = \boldsymbol 0,\\
&\boldsymbol 1'\boldsymbol \delta = 1 \qquad (\mbox{only for } \boldsymbol \delta \equiv \boldsymbol p). \nonumber
\end{align}
The value of $\gamma$ is called the adjustment factor. If $RM_{\boldsymbol \delta}(\mathbf{A})$ is a model for probabilities with the overall effect or a model for intensities, then $\gamma = 1$ for every $\boldsymbol y$. If $RM_{\boldsymbol \delta}(\mathbf{A})$ is a model for probabilities without the overall effect, then the value of $\gamma$ depends on $\boldsymbol y$  \citep{KRD11}.

\begin{example}\label{AS3ex}
The model of AS-independence (\ref{ASdef}) is a relational model generated by the model matrix
\begin{equation}\label{ASmodelMx}
\mathbf{A} = \left(\begin{array}{ccccccc} 
1&0&0& 1 & 1& 0& 1\\
0&1&0& 1 & 0 & 1& 1\\
0&0&1& 0& 1& 1 & 1
\end{array} \right),
\end{equation}
where the order of cells is lexicographic. As $\boldsymbol 1'$ is not in the row space of $\mathbf{A}$, the model does not have the overall effect. Thus, the models $RM_{\boldsymbol \lambda} (\mathbf{A})$ and $RM_{\boldsymbol p} (\mathbf{A})$ are not equivalent. Given hypothetical data, the MLE for cell frequencies,  computed under the model for probabilities and under the model for intensities, are shown in Table \ref{TableMLEbothAS}. In the case of probabilities, the estimates for sufficient statistics are about $0.7$ times less than the observed sufficient statistics. In the case of intensities, the estimated total is approximately  $79.73$, while the observed total is $100$. The estimates were obtained using the R-package {\tt{gIPFrm}} \citep{gIPFpackage}. 
\begin{table}
\centering
\caption{Maximum likelihood estimates under the model of AS-independence of variables $F$, $N$, $O$, under the multinomial and Poisson sampling. }
\vspace{5mm}
\begin{tabular}{lccccl}
\cline{1-5} \\[-5pt]
   & \multicolumn{2}{c}{$O = \mbox{\it{No}}$ }&\multicolumn{2}{c}{$O = \mbox{\it{Yes}}$}&\\
\cline{1-5} \\[-2pt]
   & $N=\mbox{\it{No}}$ & $N = \mbox{\it{Yes}}$ & $N = \mbox{\it{No}}$ & $N=\mbox{\it{Yes}}$ &\\ [1pt]
\cline{1-5} \\ 
$F = \mbox{\it{No}}$ & {\it empty}  & \multicolumn{1}{c}{$14$ } & \multicolumn{1}{c}{$25$ } & \multicolumn{1}{c}{$16$ }& {\it - observed} \\ [4pt]
&{\it empty}  &$27.33$ & $32.60$ & $8.91$ & {\it - multinomial}\\
& {\it empty}  &$3.31$ &$7.29$ & $24.13$ &{\it - Poisson}\\ [8pt]
$F = \mbox{\it{Yes}}$ & \multicolumn{1}{c}{$10$ }  & \multicolumn{1}{c}{$5$} & \multicolumn{1}{c}{$3$} & \multicolumn{1}{c}{$27$} &{\it - observed} \\ [4pt]
&  $18.46$  & $ 5.04$ & $6.02$ & $1.64$ &{\it - multinomial}\\
&  $1.26$  & $4.17$ & $9.18$ & $30.39$ &{\it - Poisson}\\
\end{tabular} 
\label{TableMLEbothAS}
\end{table}
\qed
\end{example}

A necessary and sufficient condition for the existence of the MLE is given in the next theorem. Its proof uses the following lemma: 

\begin{lemma}\label{PositiveExists}
If $\boldsymbol y > \boldsymbol 0$, the MLE $\hat{\boldsymbol \delta}_{\boldsymbol y}$ exists.
\end{lemma}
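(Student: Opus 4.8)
The plan is to establish the existence of a likelihood maximizer separately for the two sampling schemes, exploiting the standing assumptions that $\mathbf{A}$ has full row rank, $\mathbf{A}$ has no zero column, and the hypothesis $\boldsymbol y > \boldsymbol 0$. Throughout I would parameterize the model by $\boldsymbol\beta \in \mathbb{R}^J$ via $\log \boldsymbol\delta = \mathbf{A}'\boldsymbol\beta$, as in (\ref{genll}), so that $\delta_i = \exp((\mathbf{A}'\boldsymbol\beta)_i)$. Since $\mathbf{A}$ has full row rank, $\mathbf{A}'$ is injective and $\boldsymbol\beta \mapsto \boldsymbol\delta$ is a bijection onto $RM_{\boldsymbol\delta}(\mathbf{A})$ before the normalization is imposed.

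For Poisson sampling the log-likelihood, up to an additive constant, is
\begin{equation*}
\ell(\boldsymbol\beta) = (\mathbf{A}\boldsymbol y)'\boldsymbol\beta - \boldsymbol 1'\exp(\mathbf{A}'\boldsymbol\beta),
\end{equation*}
which is smooth, with gradient $\mathbf{A}\boldsymbol y - \mathbf{A}\boldsymbol\lambda$ (so that stationarity reproduces the first line of (\ref{MLEsys}) with $\gamma=1$) and Hessian $-\mathbf{A}\,\mathrm{diag}(\boldsymbol\lambda)\,\mathbf{A}'$, which is negative definite because $\boldsymbol\lambda > \boldsymbol 0$ and $\mathbf{A}$ has full row rank; hence $\ell$ is strictly concave. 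To get existence I would examine $\ell$ along an arbitrary ray $\boldsymbol\beta = \boldsymbol\beta_0 + t\boldsymbol u$, $\boldsymbol u \neq \boldsymbol 0$, as $t \to +\infty$. If $(\mathbf{A}'\boldsymbol u)_i > 0$ for some $i$, the exponential term outgrows the linear term and $\ell \to -\infty$. Otherwise $\mathbf{A}'\boldsymbol u \le \boldsymbol 0$; since $\mathbf{A}'$ is injective we have $\mathbf{A}'\boldsymbol u \neq \boldsymbol 0$, so some coordinate is strictly negative, and then the linear term $t\sum_i y_i (\mathbf{A}'\boldsymbol u)_i \to -\infty$ (this is where $\boldsymbol y > \boldsymbol 0$ is essential) while the exponential term stays bounded, so again $\ell \to -\infty$. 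For a concave function this ray analysis forces the (closed, convex) superlevel sets to be bounded, hence compact, so a maximizer $\boldsymbol\beta^*$ exists, is unique by strict concavity, and yields $\hat{\boldsymbol\delta}_{\boldsymbol y} = \exp(\mathbf{A}'\boldsymbol\beta^*)$.

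For multinomial sampling the objective $\sum_i y_i \log p_i$ must be maximized over $M = RM_{\boldsymbol p}(\mathbf{A})$, i.e.\ over the positive distributions in the open simplex satisfying the dual constraints $\mathbf{D}\log\boldsymbol p = \boldsymbol 0$ of (\ref{Dual}). Here I would argue by compactness rather than concavity, since $M$ is a curved set. First, $M$ is non-empty: setting $\theta_j = t$ for all $j$ gives $\boldsymbol 1'\boldsymbol\delta = \sum_i t^{c_i}$ with $c_i = \sum_j a_{ji} \ge 1$ (no zero column), which runs continuously over $(0,\infty)$, so some $t$ achieves $\boldsymbol 1'\boldsymbol\delta = 1$. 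The objective is bounded above by $0$ and tends to $-\infty$ as any $p_i \to 0$, because $\boldsymbol y > \boldsymbol 0$. Passing a maximizing sequence $\boldsymbol p^{(n)} \in M$ to a subsequence convergent in the compact closed simplex to $\boldsymbol p^*$, the limit cannot lie on the boundary, for that would drive the objective to $-\infty$ against the finiteness of the supremum. Hence $\boldsymbol p^*$ is interior, where $\mathbf{D}\log\boldsymbol p = \boldsymbol 0$ and $\boldsymbol 1'\boldsymbol p = 1$ are continuous and pass to the limit, so $\boldsymbol p^* \in M$ attains the maximum and is the desired $\hat{\boldsymbol\delta}_{\boldsymbol y}$.

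The step I expect to be the main obstacle is the multinomial case when the overall effect is absent: one cannot obtain the probability MLE by merely normalizing the Poisson fit, because rescaling a distribution moves it out of $RM_{\boldsymbol p}(\mathbf{A})$ unless $\boldsymbol 1' \in \mathcal{R}(\mathbf{A})$, which is precisely why the adjustment factor $\gamma$ may differ from $1$. The two schemes therefore require genuinely different arguments, and care is needed to verify that $M$ is non-empty and that the limiting distribution stays inside the model. In the Poisson case the delicate point is instead the direction analysis for coercivity, where the joint use of full row rank (to exclude $\mathbf{A}'\boldsymbol u = \boldsymbol 0$) and the strict positivity of $\boldsymbol y$ (to send the linear term to $-\infty$) is indispensable.
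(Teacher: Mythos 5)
Your proof is correct, but in the multinomial case it follows a genuinely different route from the paper. For Poisson sampling, your coercivity-plus-strict-concavity argument in the natural parameter $\boldsymbol\beta$ is exactly the standard regular-exponential-family proof that the paper disposes of by citation to \cite{Andersen74}, so there you and the paper agree in substance. For multinomial sampling, however, the paper does not work on the simplex directly: it reduces the problem to the Poisson case through the adjustment factor $\gamma$ of (\ref{MLEsys}), invoking Lemmas 3.5 and 3.6 of \cite{KRipf1} to confine $\gamma$ to a compact interval $[\gamma_1,\gamma_2]$, to obtain existence of the Poisson MLE $\hat{\boldsymbol\lambda}_{\gamma\boldsymbol y}$ for every such $\gamma$ (this is where $\boldsymbol y>\boldsymbol 0$ enters), and to find the unique $\gamma^*$ with $\boldsymbol 1'\hat{\boldsymbol\lambda}_{\gamma^*\boldsymbol y}=1$, whence $\hat{\boldsymbol p}_{\boldsymbol y}=\hat{\boldsymbol\lambda}_{\gamma^*\boldsymbol y}$. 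You instead run a self-contained maximizing-sequence argument: non-emptiness of the model (a nice use of the no-zero-column assumption plus the intermediate value theorem), compactness of the closed simplex, exclusion of boundary limit points because $\boldsymbol y>\boldsymbol 0$ drives the objective to $-\infty$ there, and passage of the constraints $\mathbf{D}\log\boldsymbol p=\boldsymbol 0$ and $\boldsymbol 1'\boldsymbol p=1$ to the strictly positive limit. Your route buys independence from the external lemmas of \cite{KRipf1} and treats models with and without the overall effect uniformly; the paper's route buys more than bare existence, namely the identity $\hat{\boldsymbol p}_{\boldsymbol y}=\hat{\boldsymbol\lambda}_{\gamma^*\boldsymbol y}$ linking the multinomial MLE to a Poisson MLE for rescaled data, which is the structural fact exploited later (notably in the G-IPF algorithm). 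One small difference in what is delivered: your simplex argument yields existence but not uniqueness of the multinomial maximizer, whereas the paper gets uniqueness from the characterization of the MLE as the unique solution of (\ref{MLEsys}); this is only a presentational gap, since that uniqueness is prior knowledge the paper cites rather than part of the lemma's claim.
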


\begin{proof}

A relational model for intensities is a regular exponential family \citep{KRD11}, and the standard proof applies \citep[cf.][]{Andersen74}.

In the case of probabilities, $\boldsymbol \delta \equiv \boldsymbol p$, the MLE, if exists, is the unique solution to (\ref{MLEsys}). \citet[Lemma 3.5]{KRipf1} showed that there exist $\gamma_1, \gamma_2 > 0$ such that the adjustment factor $\gamma \in [\gamma_1, \gamma_2]$.  Since $\gamma \boldsymbol y > \boldsymbol 0$, the MLE  $\hat{\boldsymbol \lambda}_{\gamma \boldsymbol y}$ under the model for intensities  $RM_{\boldsymbol \lambda}(\mathbf{A})$ exists for every $\gamma \in [\gamma_1, \gamma_2]$, and, by Lemma 3.6 in \cite{KRipf1}, one can find a unique $\gamma^*$ such that $\boldsymbol 1'\hat{\boldsymbol \lambda}_{\gamma^*\boldsymbol y} = 1$. Because $\hat{\boldsymbol \lambda}_{\gamma^*\boldsymbol y}$ satisfies (\ref{MLEsys}),  $\hat{\boldsymbol p}_{\boldsymbol y} = \hat{\boldsymbol \lambda}_{\gamma^*\boldsymbol y}$. 
\end{proof}

As shown next, the MLE may exists when some of the observed frequencies are zero.

\vspace{4mm}

\textbf{Example \ref{AS3ex}} (revisited):

Let $\boldsymbol q = (0,0,0,0,0,0,1)'$ be the observed distribution. Under the model of AS-independence, the MLEs for cell probabilities exist and are equal to $$\hat{\boldsymbol p} = \left(\sqrt[3]{2} - 1, \sqrt[3]{2} - 1, \sqrt[3]{2} - 1, (\sqrt[3]{2} - 1)^2, (\sqrt[3]{2} - 1)^2, (\sqrt[3]{2} - 1)^2, (\sqrt[3]{2} - 1)^3\right)'.$$ \qed

\begin{theorem}\label{MLEnoOvEff}
Let $\boldsymbol y$ be the vector of observed frequencies under Poisson or multinomial sampling, and let $RM_{\boldsymbol \delta}(\mathbf{A})$ be a relational model. The MLE $\hat{\boldsymbol \delta}_{\boldsymbol y}$ under the model exists if and only if there is a positive vector $\boldsymbol z$, such that $\mathbf{A}\boldsymbol z = \mathbf{A}\boldsymbol q$, with $\boldsymbol q$ defined in (\ref{b}).
\end{theorem}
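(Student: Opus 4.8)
The plan is to prove the two implications separately, with the reverse implication reducing to the positive-data case already settled in Lemma~\ref{PositiveExists}. For necessity, I would start from the fact that whenever the MLE $\hat{\boldsymbol\delta}_{\boldsymbol y}$ exists it lies in $\mathcal{P}$ and solves the likelihood equations (\ref{MLEsys}); in particular $\mathbf{A}\hat{\boldsymbol\delta}_{\boldsymbol y}=\gamma\,\mathbf{A}\boldsymbol q$ for the adjustment factor $\gamma$. Since $\gamma=1$ for intensities and for probabilities with the overall effect, and $\gamma\in[\gamma_1,\gamma_2]\subset(0,\infty)$ for probabilities without the overall effect, in every case $\gamma>0$. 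Setting $\boldsymbol z:=\gamma^{-1}\hat{\boldsymbol\delta}_{\boldsymbol y}$ then produces a strictly positive vector with $\mathbf{A}\boldsymbol z=\mathbf{A}\boldsymbol q$, which is the required witness.

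For sufficiency, the key structural fact I would isolate first is that, restricted to the model, the log-likelihood depends on the data only through the sufficient statistic $\mathbf{A}\boldsymbol y$. Writing $\log\boldsymbol\delta=\mathbf{A}'\boldsymbol\beta$ with $\boldsymbol\beta=\log\boldsymbol\theta$, the data-dependent part of the Poisson (respectively multinomial) log-likelihood is $\boldsymbol y'\log\boldsymbol\delta=(\mathbf{A}\boldsymbol y)'\boldsymbol\beta$, while the remaining Poisson term $\boldsymbol 1'\boldsymbol\lambda$ and the multinomial normalization $\boldsymbol 1'\boldsymbol p=1$ do not involve the data at all. Given a positive $\boldsymbol z$ with $\mathbf{A}\boldsymbol z=\mathbf{A}\boldsymbol q$, I would rescale it to $\tilde{\boldsymbol z}:=\boldsymbol z$ in the Poisson case and $\tilde{\boldsymbol z}:=(\boldsymbol 1'\boldsymbol y)\boldsymbol z$ in the multinomial case, so that $\mathbf{A}\tilde{\boldsymbol z}=\mathbf{A}\boldsymbol y$ with $\tilde{\boldsymbol z}>\boldsymbol 0$. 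By the structural fact, the log-likelihood for the surrogate data $\tilde{\boldsymbol z}$ coincides, as a function on the model, with the log-likelihood for $\boldsymbol y$; hence the two estimation problems have the same maximizers. Since $\tilde{\boldsymbol z}>\boldsymbol 0$, Lemma~\ref{PositiveExists} guarantees that the MLE for $\tilde{\boldsymbol z}$ exists, and by the coincidence of the objectives this is exactly the MLE $\hat{\boldsymbol\delta}_{\boldsymbol y}$.

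The step I expect to require the most care is the reduction in the multinomial case, where the surrogate $\tilde{\boldsymbol z}$ is a strictly positive real vector rather than a vector of integer counts, and where the adjustment factor is not identically one. I would verify that Lemma~\ref{PositiveExists} (whose proof passes through the intensity model $RM_{\boldsymbol\lambda}(\mathbf{A})$ and the scaling argument of \cite{KRipf1}) applies verbatim to any strictly positive real data vector, so that the normalization $\boldsymbol 1'\boldsymbol\delta=1$ can be restored through the unique $\gamma^*$ of that lemma. Once this is confirmed, the coincidence of the likelihood functions makes the transfer of the maximizer from $\tilde{\boldsymbol z}$ back to $\boldsymbol y$ immediate, and the two implications together yield the stated equivalence.
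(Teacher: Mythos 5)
Your proof is correct and shares the paper's skeleton: the necessity half is the paper's own argument (the MLE solves (\ref{MLEsys}), so dividing it by the positive adjustment factor produces the witness $\boldsymbol z$), and sufficiency is again reduced to Lemma~\ref{PositiveExists} through a strictly positive surrogate data vector. The transfer step, however, is genuinely different. The paper normalizes $\boldsymbol z$ to the positive probability vector $\boldsymbol v = \boldsymbol z/(\boldsymbol 1'\boldsymbol z)$, obtains $\hat{\boldsymbol p}_{\boldsymbol v}$ from Lemma~\ref{PositiveExists}, and then verifies that $\hat{\boldsymbol p}_{\boldsymbol v}$ satisfies the system (\ref{MLEsys}) for the data $\boldsymbol q$ with a rescaled adjustment factor; this step tacitly uses the converse of the uniqueness statement attached to (\ref{MLEsys}) --- that a point of the model solving the system is in fact the MLE --- which rests on the cited results of \cite{KRD11}. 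You transfer through the objective function instead: since $\tilde{\boldsymbol z}-\boldsymbol y \in Ker(\mathbf{A})$ while $\log\boldsymbol\delta$ lies in the row space of $\mathbf{A}$, the orthogonal complement of $Ker(\mathbf{A})$, the two log-likelihoods coincide as functions on the model, so they have exactly the same maximizers. This is more elementary (no appeal to the likelihood equations in the converse direction), handles Poisson and multinomial sampling uniformly rather than treating Poisson by a separate citation, and isolates why the condition is the natural one: only $\mathbf{A}\boldsymbol y$ enters the restricted likelihood. What the paper's bookkeeping buys in exchange is the explicit adjustment factor of the resulting MLE, $\gamma = \gamma_{\boldsymbol v}(1-\boldsymbol 1'\boldsymbol d_1)$, information of the kind that is used downstream (for instance in Theorem~\ref{MLEextendTHnew} and in the G-IPF algorithm). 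Your caution about applying Lemma~\ref{PositiveExists} to a non-integer positive data vector is warranted but resolves exactly as you anticipate: nothing in its proof uses integrality, and indeed the paper itself applies that lemma to the real-valued vector $\boldsymbol v$.
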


\begin{proof}

In the case of intensities, $\boldsymbol \delta \equiv \boldsymbol \lambda$, the standard proof for regular exponential families \citep[cf.][]{Andersen74} applies.

The case of probabilities, $\boldsymbol \delta \equiv \boldsymbol p$, is considered next.
Suppose $\hat{\boldsymbol p}_{\boldsymbol y} > 0$ exists. By Corollary 4.2 in \cite{KRD11}, $\,\, \mathbf A\hat{\boldsymbol p}_{\boldsymbol y} = \gamma  \mathbf A\boldsymbol q$ for some $\gamma > 0$. 
Therefore, $\hat{\boldsymbol p}_{\boldsymbol y} = \gamma \boldsymbol q + \boldsymbol d$,
for some $\boldsymbol d \in Ker (\mathbf A)$. Take  
$$\boldsymbol z = \frac{1}{\gamma}\hat{\boldsymbol p}_{\boldsymbol y} = \boldsymbol q +  \frac{1}{\boldsymbol \gamma}\boldsymbol d > \boldsymbol 0.$$
Then, as $\frac{1}{\boldsymbol \gamma}\boldsymbol d \in Ker(\mathbf{A})$, $\,\,\mathbf{A}\boldsymbol z = \mathbf{A} \boldsymbol q$, as required. 

To prove the converse, assume that there exists a $\boldsymbol z > \boldsymbol 0$, such that $\mathbf{A}\boldsymbol z = \mathbf{A} \boldsymbol q$. Thus,  $\boldsymbol z = \boldsymbol q + \boldsymbol d$ for some $\boldsymbol d \in Ker (\mathbf A)$. 
Let $$\boldsymbol d_1 =  \frac{1}{1+\boldsymbol 1'\boldsymbol d}\boldsymbol d,$$
and note that $1+\boldsymbol 1'\boldsymbol d = \boldsymbol 1'\boldsymbol q + \boldsymbol 1'\boldsymbol d = \boldsymbol 1' \boldsymbol z > 0$. 
Next, consider  $\boldsymbol v = (1-\boldsymbol 1'\boldsymbol d_1)\boldsymbol q + \boldsymbol d_1$. Then $\boldsymbol 1'\boldsymbol v = (1-\boldsymbol 1'\boldsymbol d_1) + \boldsymbol 1'\boldsymbol d_1 = 1$, and
$$
\boldsymbol v  =  (1-\boldsymbol 1'\boldsymbol d_1)\boldsymbol q + \boldsymbol d_1 = \frac{1}{1+\boldsymbol 1'\boldsymbol d}\boldsymbol q +  \frac{1}{1+\boldsymbol 1'\boldsymbol d}\boldsymbol d =  \frac{1}{1+\boldsymbol 1'\boldsymbol d}(\boldsymbol q + \boldsymbol d) > 0.
$$
Therefore, $\boldsymbol v$ is a positive probability distribution, and, by Lemma \ref{PositiveExists}, the MLE $\hat{\boldsymbol p}_{\boldsymbol v}$ exists, and satisfies:
\begin{align*}
&\mathbf{A}\hat{\boldsymbol p}_{\boldsymbol v} = \gamma_{\boldsymbol v} \mathbf{A} \boldsymbol v, \\
&\mathbf{D} \mbox{log } \hat{\boldsymbol p}_{\boldsymbol v} = \boldsymbol 0,\\
&\boldsymbol 1'\hat{\boldsymbol p}_{\boldsymbol v} = 1,
\end{align*}
for some $\gamma_{\boldsymbol v} > 0$.   Then, from the definition of $\boldsymbol v$, $\mathbf{A} \hat{\boldsymbol p}_{\boldsymbol v} = \gamma_{\boldsymbol v} \mathbf{A} \boldsymbol v = \gamma_{\boldsymbol v}  (1-\boldsymbol 1'\boldsymbol d_1)\mathbf{A} \boldsymbol q$, that is, $\boldsymbol p_{\boldsymbol v}$ is also the MLE for $\boldsymbol q$ with the adjustment factor $\gamma = \gamma_{\boldsymbol v}  (1-\boldsymbol 1'\boldsymbol d_1)$. 
\end{proof}

The statement of the theorem is illustrated in the next example. 

\begin{example}\label{exConstruct}

Let $RM_{\boldsymbol p}(\mathbf{A})$ be the model for probabilities generated by 
\begin{equation*}
\mathbf{A} = \left( 
\begin{array}{ccccc}
1&1&1&0&1\\
1&1&0&0&1\\
1&0&0&1&1
\end{array}
\right),
\end{equation*}
and let $\boldsymbol q = (3/7, 3/7, 0 , 1/7,0)'$ be the observed probability distribution. Consider any vector $\boldsymbol z$, whose subset sums, $\mathbf{A}\boldsymbol z$, are equal to the observed subset sums: 
\begin{align*}
&z_1 + z_2 + z_3 + z_5 = 6/7, \quad
z_1 + z_2 + z_5 = 6/7, \quad z_1 + z_4 + z_5 = 4/7.
\end{align*}
The first two equations imply that $z_3 = 0$. Therefore, there is no (strictly) positive distribution with the same subset sums as those observed, and thus, $\boldsymbol q$ does not have an MLE in the model.  \qed
\end{example}

In the next section, an extended relational model is defined as the polynomial variety corresponding to the model matrix. It is further shown that the extended model coincides with the set of pointwise limits of sequences of distributions in the original model and is also the closure with respect to Bregman information divergence.

\section{Extended relational models} \label{SecClosure}

Let $\mathbf{A}$ be the model matrix of a relational model, and let $\mathcal{X}_{\mathbf A}$ denote the polynomial  variety associated with $\mathbf{A}$ \citep{SturBook}:
\begin{equation}\label{variety}
\mathcal{X}_{\mathbf A} = \left\{\boldsymbol \delta \in \mathbb{R}^{|\mathcal{I}|}_{\geq 0}: \,\, \boldsymbol \delta^{\boldsymbol d^+} = \boldsymbol \delta^{\boldsymbol d^-}, \,\,  \forall  \boldsymbol d \in Ker(\mathbf{A}) \right \}.
\end{equation}
 
\begin{definition}\label{ExtendDef}
The extended relational model for intensities, $\xbar{RM}_{\boldsymbol \lambda} (\mathbf{A})$, is the set of distributions 
\begin{equation}\label{ExtMint}
\boldsymbol \lambda \in \mathcal{X}_{\mathbf A}.
\end{equation}
The extended relational model for probabilities, $\,\xbar{RM}_{\boldsymbol p} (\mathbf{A})$, is  the set of distributions 
\begin{equation}\label{ExtMprob}
\boldsymbol p \in \mathcal{X}_{\mathbf A} \cap \Delta_{|\mathcal{I}|-1},
\end{equation} 
where $\Delta_{|\mathcal{I}|-1}$ is the $(|\mathcal{I}|-1)$-dimensional simplex.\qed
\end{definition}

For positive distributions being in $\mathcal{X}_{\mathbf{A}}$ is equivalent to the representations (\ref{Dual}), (\ref{pRatios}), and (\ref{pDiff}). Therefore, the   relational model generated by $\mathbf{A}$ is a subset of the corresponding extended model. For a positive $\boldsymbol \delta$, whether or not  (\ref{Dual}), (\ref{pRatios}), and (\ref{pDiff}) hold does not depend on the choice of $\mathbf{D}$. However, as illustrated next, there exist  $\boldsymbol \delta \geq \boldsymbol 0$, which, due to the pattern of zeros, satisfy (\ref{pDiff}) for some choice of $\mathbf{D}$ and do not satisfy for another. 

\vspace{5mm}

\noindent{\bf Example \ref{AS3ex}} (revisited):
The model has dual representations using matrices $\mathbf{D}_1$ and $\mathbf{D}_2$:
$$\mathbf{D}_1 = \left(\begin{array} {rrrrrrr}
1&1&0&-1&0&0&0\\
1&0&1&0&-1&0&0\\ 
0&1&1&0&0&-1&0 \\
1&1&1&0&0&0&-1 \\
\end{array}\right) \,\,\,  \,\,\,
\mathbf{D}_2 = \left(\begin{array} {rrrrrrr}
0&0&1&1&0&0&-1\\
0&1&0&0&1&0&-1\\ 
1&0&0&0&0&1&-1\\
1&1&1&0&0&0&-1 \\
\end{array}\right).$$
The distribution $\boldsymbol \delta = (0,0,0,1,1,1,0)'$ satisfies (\ref{pDiff}) if obtained from $\mathbf{D}_2$, but does not satisfy (\ref{pDiff}) if obtained using $\mathbf{D}_1$, and therefore, $\boldsymbol \delta \notin \mathbf{X}_{\mathbf{A}}$. 
\qed
\vspace{2mm}

The support $supp(\boldsymbol \delta) = \{i \in \mathcal{I}: \,\, \delta_i > 0\}$ of   distributions with zero components which are in $\mathcal{X}_{\mathbf{A}}$ can be characterized using the concept of a facial set which is defined next.

Let $\boldsymbol a_1, \dots, \boldsymbol a_{|\mathcal{I}|}$ denote the columns of $\mathbf{A}$, and let $C_{\mathbf{A}}$  be the set of all non-negative linear combinations of these columns: 
\begin{equation}\label{IntCone}
C_{\mathbf{A}} = \{ \boldsymbol t \in \mathbb{R}^J_{\geq 0}: \,\, \exists \boldsymbol\delta \in \mathbb{R}^{|\mathcal{I}|}_{\geq 0}  \quad \boldsymbol t = \mathbf{A}\boldsymbol \delta\}. 
\end{equation}
The relative interior of $C_{\mathbf{A}}$, $relint(C_{\mathbf{A}})$, comprises such $\boldsymbol t \in \mathbb{R}^J_{> 0}$, for which there exists a (strictly) positive $\boldsymbol\delta$ that satisfies $\boldsymbol t = \mathbf{A}\boldsymbol \delta$.

The set $C_{\mathbf{A}}$ is a polyhedral cone in $\mathbb{R}^J$. If an affinely independent set $\boldsymbol a_{i_1}, \boldsymbol a_{i_2}, \dots, \boldsymbol a_{i_f}$ of columns of $\mathbf{A}$ spans a proper face of $C_{\mathbf{A}}$, the set of indices $F = \{i_1, i_2, \dots,i_f\}$ is called  facial \citep[cf.][]{GrunbaumConvex, GeigerMeekSturm2006}.  The facial sets of $\mathbf{A}$ are determined by its row space \citep[cf.][]{FienbergRinaldo2012}. If $\boldsymbol t \in C_{\mathbf{A}}\setminus relint(C_{\mathbf{A}})$, then $\boldsymbol t$ is said to lie on a face of $C_{\mathbf{A}}$. In that case, there is a facial set $F = F(\boldsymbol t)$, such that
\begin{equation}\label{linComb}
\boldsymbol t = s_1\boldsymbol a_{i_1} + \dots + s_f\boldsymbol a_{i_f}.
\end{equation}
Equivalently, a set $F$ is facial if and only if there exists a $\boldsymbol c \in \mathbb{R}^J$, such that $\boldsymbol c'\boldsymbol a_i = 0$ for every $i \in F$ and $\boldsymbol c'\boldsymbol a_i  > 0$ for every $i \notin F$. The properties of facial sets are formulated in Lemma \ref{LemmaFacial} given in the Appendix. In particular, only  distributions whose support is $\mathcal{I}$ or a facial set of $\mathbf{A}$ may belong to $\mathcal{X}_{\mathbf{A}}$. As an example, the facial sets of the model matrix (\ref{ASmodelMx}) of AS-independence are $\{1\}$, $\{2\}$, $\{3\}$, $\{1,2,4\}$, $\{2,3,6\}$, $\{1,3,5\}$. The support  $\{4,5,6\}$ of $\boldsymbol \delta =(0,0,0,1,1,1,0)'$ from Example \ref{AS3ex} is not a facial set,  and thus $\boldsymbol \delta$ cannot be an element of $\mathcal{X}_{\mathbf{A}}$.

The following theorem describes the structure of the parameter set of the extended relational model.

\begin{theorem}\label{THclosure}
The extended relational model $\,\xbar{RM}_{\delta}(\mathbf{A})$ is the closure of the relational model $RM_{\boldsymbol \delta}(\mathbf{A})$ in the topology of pointwise convergence:
$\,\xbar{RM}_{\delta}(\mathbf{A}) = cl (RM_{\boldsymbol \delta}(\mathbf{A}))$.
\end{theorem}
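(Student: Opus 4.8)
The plan is to prove the two inclusions $cl(RM_{\boldsymbol\delta}(\mathbf A))\subseteq\mathcal X_{\mathbf A}$ and $\mathcal X_{\mathbf A}\subseteq cl(RM_{\boldsymbol\delta}(\mathbf A))$, using Definition \ref{ExtendDef} to identify the extended model with $\mathcal X_{\mathbf A}$ for intensities and with $\mathcal X_{\mathbf A}\cap\Delta_{|\mathcal I|-1}$ for probabilities, so that it suffices to work with the variety. Throughout I write $\boldsymbol m(\boldsymbol\theta)$ for the point with coordinates $m_i(\boldsymbol\theta)=\prod_j\theta_j^{a_{ji}}$, so that $RM_{\boldsymbol\delta}(\mathbf A)$ is the image of $\boldsymbol m$ (intersected with the simplex in the probability case). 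I would settle the intensity case first and then reduce the probability case to it.

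For the easy inclusion I would note that $\mathcal X_{\mathbf A}$ is cut out by the polynomial equations $\boldsymbol\delta^{\boldsymbol d^+}=\boldsymbol\delta^{\boldsymbol d^-}$ and is therefore closed in the topology of pointwise convergence; intersecting with the closed simplex keeps it closed. Since every positive distribution in the model satisfies these equations, by the equivalence of the monomial parameterization with (\ref{pDiff}) for positive $\boldsymbol\delta$ noted after Definition \ref{ExtendDef}, we have $RM_{\boldsymbol\delta}(\mathbf A)\subseteq\mathcal X_{\mathbf A}$, and a closed set containing $RM_{\boldsymbol\delta}(\mathbf A)$ contains its closure.

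The substance is the reverse inclusion. Fix $\boldsymbol\delta\in\mathcal X_{\mathbf A}$; if $\boldsymbol\delta>\boldsymbol 0$ it already lies in $RM_{\boldsymbol\delta}(\mathbf A)$, so assume it has zeros. By Lemma \ref{LemmaFacial} its support is a facial set $F$, and there is a $\boldsymbol c$ with $\boldsymbol c'\boldsymbol a_i=0$ for $i\in F$ and $\boldsymbol c'\boldsymbol a_i>0$ for $i\notin F$. I would first show that $\boldsymbol\delta$ restricted to $F$ lies in the relational model on the sub-table $F$ generated by the columns $\mathbf A_F$: any $\boldsymbol v\in Ker(\mathbf A_F)$ extends by zeros to a kernel vector of $\mathbf A$, so the corresponding equation in (\ref{pDiff}), evaluated where $\boldsymbol\delta$ is strictly positive, gives $\boldsymbol v'\log(\boldsymbol\delta|_F)=0$; hence $\log(\boldsymbol\delta|_F)\in\mathcal R(\mathbf A_F)$ and there are positive parameters $\boldsymbol\eta$ with $\delta_i=\prod_j\eta_j^{a_{ji}}$ for all $i\in F$. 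Setting $\theta_j^{(n)}=\eta_j\,\rho_n^{-c_j}$ with $\rho_n\to\infty$, the cells in $F$ stay fixed at $\delta_i$ while for $i\notin F$ the cell equals $(\prod_j\eta_j^{a_{ji}})\,\rho_n^{-\boldsymbol c'\boldsymbol a_i}\to 0$, so $\boldsymbol m(\boldsymbol\theta^{(n)})\to\boldsymbol\delta$ pointwise with each $\boldsymbol m(\boldsymbol\theta^{(n)})\in RM_{\boldsymbol\lambda}(\mathbf A)$, which disposes of the intensity case.

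The main obstacle is the probability case: the points $\boldsymbol m(\boldsymbol\theta^{(n)})$ sum to $1+r_n$ with $r_n=\sum_{i\notin F}(\prod_j\eta_j^{a_{ji}})\rho_n^{-\boldsymbol c'\boldsymbol a_i}\to 0^+$ rather than to $1$, and, absent the overall effect, a monomial point cannot be rescaled by a common constant without leaving $\mathcal X_{\mathbf A}$. I would repair this by correcting a single parameter. Choose a row $j_0$ with $a_{j_0 i^*}=1$ for some $i^*\in F$, possible because $\mathbf A$ has no zero column, and observe that $\boldsymbol 1'\boldsymbol m$, as a function of $\theta_{j_0}$ with the remaining parameters held at their base values, is affine with positive slope $C_n$. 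Solving $\boldsymbol 1'\boldsymbol m=1$ gives $\theta_{j_0}^{*}=\theta_{j_0}^{(n)}-r_n/C_n$; since $C_n\ge\delta_{i^*}/\theta_{j_0}^{(n)}$, the relative correction is at most $r_n/\delta_{i^*}\to 0$, so $\theta_{j_0}^{*}>0$ for large $n$ and every cell is multiplied by $(1-o(1))^{a_{j_0 i}}$. The corrected points are positive monomial distributions summing to exactly $1$, hence elements of $RM_{\boldsymbol p}(\mathbf A)$, and they still converge pointwise to $\boldsymbol\delta$, completing the reverse inclusion and the theorem. I expect the genuinely delicate point to be verifying that this single-parameter correction preserves both positivity and convergence, that is, controlling $C_n$ and $r_n$ simultaneously as $\rho_n\to\infty$.
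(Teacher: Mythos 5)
Your proposal is correct and takes essentially the same route as the paper's proof: the easy inclusion via closedness of $\mathcal{X}_{\mathbf{A}}$ (and of the simplex), and the substantive inclusion by writing $\boldsymbol\delta$ on its facial support $F$ as a positive monomial point of $\mathbf{A}_F$, scaling the parameters by $\rho_n^{-c_j}$ with $\boldsymbol c$ the vector exposing $F$ so that cells off $F$ vanish, and, for probabilities, restoring exact normalization by adjusting one parameter attached to a generating subset meeting $F$ --- your affine correction $\theta_{j_0}^{*}=\theta_{j_0}^{(n)}-r_n/C_n$ is precisely the paper's explicit solve for its $\eta_1^{(n)}$, with the same $1-o(1)$ control. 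The only differences are cosmetic: you supply a self-contained argument (extending $Ker(\mathbf{A}_F)$ by zeros) for the existence of the positive parameters $\boldsymbol\eta$ on $F$, where the paper cites Geiger, Meek and Sturmfels, and your correction row needs only to meet $F$ (guaranteed since $\mathbf{A}$ has no zero column), whereas the paper additionally arranges $c_1>0$.
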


The proof is provided in the Appendix. The theorem says that every distribution in the extended model can be obtained as a pointwise limit of a sequence of distributions in the non-extended model.  In the following example, such a sequence is found using the construction described in the proof.  

\vspace{5mm}

\noindent{\bf Example \ref{AS3ex}} (revisited):

\vspace{2mm}

The set $F=\{2,3,6\}$ is facial set of $\mathbf{A}$, and thus, by Lemma \ref{LemmaFacial}, the extended model contains a distribution $\boldsymbol p = (0, p_2, p_3, 0 , 0, p_6, 0)'$, where $p_2, p_3, p_6 > 0$ and $p_2 + p_3 + p_6 = 1$. To construct a sequence of distributions in the original model which converges to $\boldsymbol p$, find $\theta_2, \theta_3$ such that  
$$\theta_2 =   p_2, \,\, \theta_3 = p_3, \,\, \theta_2\theta_3 = p_6.$$
From the normalization condition, $$\theta_2 = \frac{1 - \theta_3}{1+\theta_3}.$$
Take an arbitrary $\theta_1 \in (0,1)$, then set
$$\theta_2^{(n)} = \frac{1 - \theta_1n^{-1} - \theta_3 - \theta_1 n^{-1} \theta_3}{1+ \theta_1n^{-1} + \theta_3 + \theta_1 n^{-1} \theta_3},$$
and consider
$$\boldsymbol p^{(n)} = (\theta_1n^{-1}, \quad \theta_2^{(n)}, \quad \theta_3, \quad \theta_1n^{-1} \theta_2^{(n)}, \quad \theta_1n^{-1} \theta_3,  \quad \theta_2^{(n)}\theta_3, \quad \theta_1n^{-1}\theta_2^{(n)}\theta_3)'.$$
For every $n$, $\,\boldsymbol p^{(n)} \in RM_{\boldsymbol p}(\mathbf{A})$. 
As $n \to \infty$, $\theta_2^{(n)} \to \theta_2$, and therefore, $\boldsymbol p^{(n)} \to \boldsymbol p$. The construction is complete. \qed

\vspace{3mm}

An extended relational model can also be defined as a closure of the  exponential family corresponding to the original model. The closure of exponential families using the Kullback-Leibler divergence was described for regular families by  \cite{BrownBook}, among others, and for full families by \cite{CsiszarMatus2003}.  However, both of these approaches rely on the presence of the overall effect, which implies, through the possibility of normalization, that the Kullback-Leibler divergence is non-negative and Pinsker's inequality \citep[cf.][]{Csiszar} holds. In the generality considered in the present paper, the approach does not apply, and the Bregman divergence is used to define the closure.

Let $D(\cdot|| \cdot)$ denote the Bregman divergence between two vectors  $\boldsymbol t, \boldsymbol u \in \mathbb{R}^{|\mathcal{I}|}_{> 0}$, associated with the function $f(\boldsymbol x) = \sum_{i \in \mathcal{I}} x(i) \mbox{log }x(i)$: 
\begin{equation} \label{BDdef} 
D(\boldsymbol t||\boldsymbol u) = \sum_{i \in \mathcal{I}} t(i) \mbox{log }(t(i)/u(i)) + (\sum_{i \in \mathcal{I}} u(i) - \sum_{i \in \mathcal{I}} t(i)).
\end{equation}
Under the convention $0\cdot\log 0 = 0$, $D(\boldsymbol t||\boldsymbol u)$ is also defined for non-negative $\boldsymbol t$ and $\boldsymbol u$ if $supp(\boldsymbol t) \subseteq supp(\boldsymbol u)$. The function $D(\boldsymbol t|| \boldsymbol u)$ is non-negative, and $D(\boldsymbol t|| \boldsymbol u) = 0$ if and only if $\boldsymbol t = \boldsymbol u$. For any $\boldsymbol u^* \in \mathbb{R}^{|\mathcal{I}|}_{\geq 0}$ and for any convex set $\mathcal{S} \subset  \mathbb{R}^{|\mathcal{I}|}_{\geq 0}$ there exists a unique $\boldsymbol u^* \in \mathbb{R}^{|\mathcal{I}|}_{\geq 0}$, such that 
\begin{equation}\label{BregmanPr}
D(\boldsymbol u^*|| \boldsymbol u) = \min_{\boldsymbol z \in \mathcal{S}}  D(\boldsymbol z|| \boldsymbol u),
\end{equation}
see  \cite{Bregman}. This $\boldsymbol u^*$ is called the D-projection, or the Bregman projection, of $\boldsymbol u$ on $\mathcal{S}$.  If $\boldsymbol p_1$ and $\boldsymbol p_2$ are probability distributions,  then $D(\boldsymbol p_1|| \boldsymbol p_2)$ is the Kullback-Leibler divergence.

Let $\widetilde{RM}_{\delta}(\mathbf{A})$ be the closure of $RM_{\boldsymbol \delta}(\mathbf{A})$ with respect to the Bregman divergence:

$$\widetilde{RM}_{\delta}(\mathbf{A}) =\left\{\boldsymbol \delta \in \bar{\mathcal P}: \,\, \exists \boldsymbol \delta^{(n)} \in {RM}_{\delta}(\mathbf{A}), n \in \mathbb{N}, \mbox{ such that} \,\, D(\boldsymbol \delta || \boldsymbol \delta^{(n)}) \to 0 \, \mbox{ as } n \to \infty\right\}.$$

\begin{theorem}\label{ThClosureBregman}
The closures of the relational model $RM_{\boldsymbol \delta}(\mathbf{A})$ according to the pointwise convergence and to the Bregman divergence coincide. 
\end{theorem}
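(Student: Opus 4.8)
The plan is to establish the two inclusions $\widetilde{RM}_{\delta}(\mathbf{A}) \subseteq cl(RM_{\boldsymbol \delta}(\mathbf{A}))$ and $cl(RM_{\boldsymbol \delta}(\mathbf{A})) \subseteq \widetilde{RM}_{\delta}(\mathbf{A})$ separately, and to exploit the fact that by Theorem \ref{THclosure} the pointwise closure equals the explicit variety $\xbar{RM}_{\delta}(\mathbf{A})=\mathcal{X}_{\mathbf A}$ (intersected with the simplex in the probability case). The central analytic tool is the relationship between the Bregman divergence $D(\boldsymbol t\|\boldsymbol u)$ and ordinary (pointwise) convergence. The key observation I would use is that on a fixed support, $D(\boldsymbol t\|\boldsymbol u)$ behaves like a squared distance near the diagonal: $D(\boldsymbol t\|\boldsymbol u)=0$ iff $\boldsymbol t=\boldsymbol u$, it is continuous in $\boldsymbol u$, and $D(\boldsymbol t\|\boldsymbol \delta^{(n)})\to 0$ forces $\boldsymbol\delta^{(n)}\to\boldsymbol t$ componentwise. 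This last implication is what ties the divergence-closure to the pointwise-closure.

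First I would prove $\widetilde{RM}_{\delta}(\mathbf{A}) \subseteq cl(RM_{\boldsymbol \delta}(\mathbf{A}))$. Take $\boldsymbol\delta\in\widetilde{RM}_{\delta}(\mathbf{A})$, so there is a sequence $\boldsymbol\delta^{(n)}\in RM_{\boldsymbol\delta}(\mathbf{A})$ with $D(\boldsymbol\delta\|\boldsymbol\delta^{(n)})\to 0$. The convention $0\log 0=0$ requires $supp(\boldsymbol\delta)\subseteq supp(\boldsymbol\delta^{(n)})=\mathcal{I}$, which is automatic. I would argue that $D(\boldsymbol\delta\|\boldsymbol\delta^{(n)})\to 0$ implies $\boldsymbol\delta^{(n)}\to\boldsymbol\delta$ pointwise: on each coordinate the summand $\delta_i\log(\delta_i/\delta^{(n)}_i)+\delta^{(n)}_i-\delta_i$ is a nonnegative convex function of $\delta^{(n)}_i$ vanishing only at $\delta^{(n)}_i=\delta_i$, so the vanishing of the total forces each coordinate to converge (one must handle coordinates where $\delta_i=0$, for which the summand is simply $\delta^{(n)}_i\to 0$). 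Hence $\boldsymbol\delta\in cl(RM_{\boldsymbol\delta}(\mathbf{A}))$.

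For the reverse inclusion $cl(RM_{\boldsymbol\delta}(\mathbf{A}))\subseteq\widetilde{RM}_{\delta}(\mathbf{A})$, I would take $\boldsymbol\delta$ in the pointwise closure, which by Theorem \ref{THclosure} means $\boldsymbol\delta\in\mathcal{X}_{\mathbf A}$, and produce an approximating sequence with $D(\boldsymbol\delta\|\boldsymbol\delta^{(n)})\to 0$. The natural candidate is a pointwise-convergent sequence $\boldsymbol\delta^{(n)}\to\boldsymbol\delta$ in $RM_{\boldsymbol\delta}(\mathbf{A})$, which exists by Theorem \ref{THclosure}; I would then show pointwise convergence of such model sequences forces $D(\boldsymbol\delta\|\boldsymbol\delta^{(n)})\to 0$. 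On coordinates with $\delta_i>0$ this is the continuity of the summand. The delicate coordinates are those in the complement of $supp(\boldsymbol\delta)$, i.e. where $\delta_i=0$ but $\delta^{(n)}_i>0$: there the summand reduces to $\delta^{(n)}_i$, which tends to $0$, so it poses no divergence problem — but I must be sure the sequence from Theorem \ref{THclosure} genuinely sends these coordinates to zero, which it does by pointwise convergence. Assembling the finitely many coordinatewise limits gives $D(\boldsymbol\delta\|\boldsymbol\delta^{(n)})\to 0$.

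The main obstacle I anticipate is controlling the coordinates where $\delta_i=0$: the term $\delta_i\log(\delta_i/\delta^{(n)}_i)$ is an indeterminate $0\cdot\log(0/\text{positive})$ that the convention resolves to $0$, yet one must verify that the construction in Theorem \ref{THclosure} produces sequences for which the surviving term $\delta^{(n)}_i-\delta_i=\delta^{(n)}_i$ vanishes rather than merely the logarithmic factor misbehaving. In other words, the equivalence of the two closures hinges on showing that, for sequences drawn from a relational model, pointwise convergence and $D$-convergence are equivalent — the forward direction (divergence controls pointwise behavior) is the cleaner half, while the reverse (pointwise convergence yields divergence decay, with careful attention to the boundary support) is where the bookkeeping must be done carefully. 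Because $|\mathcal{I}|$ is finite, all these limits are finite sums and no uniformity beyond coordinatewise convergence is needed, which keeps the argument elementary once the support issue is handled.
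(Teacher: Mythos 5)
Your proof is correct, and it half-coincides with the paper's: the inclusion of the pointwise closure in the Bregman closure is handled there exactly as you handle it, by continuity of $D(\boldsymbol\delta^*||\cdot)$ in its second argument at positive points (the zero coordinates of $\boldsymbol\delta^*$ contributing only the harmless terms $\delta^{(n)}_i$). The genuine difference is in the other inclusion. The paper argues by compactness: if $D(\boldsymbol\delta^*||\boldsymbol\delta^{(n)})\to 0$, then eventually the sequence lies in the Bregman ball $\{\boldsymbol\delta\ge\boldsymbol 0:\,D(\boldsymbol\delta^*||\boldsymbol\delta)\le 1\}$, which is compact by a result cited from Bregman, and a pointwise-convergent subsequence is extracted. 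You instead work coordinatewise with the summands of (\ref{BDdef}): each term $\delta^*_i\log(\delta^*_i/\delta^{(n)}_i)+\delta^{(n)}_i-\delta^*_i$ is nonnegative, so the vanishing of the sum forces each term to vanish; for $\delta^*_i>0$ the term, viewed as a convex function of $\delta^{(n)}_i$ with unique zero at $\delta^*_i$, grows away from its minimum, forcing $\delta^{(n)}_i\to\delta^*_i$, while for $\delta^*_i=0$ the term is $\delta^{(n)}_i$ itself. Your route is more elementary and self-contained (no appeal to compactness of Bregman balls), and it yields convergence of the whole sequence rather than of a subsequence; it also makes explicit a step the paper leaves tacit, namely why the subsequential limit must equal $\boldsymbol\delta^*$ (this needs continuity of $D(\boldsymbol\delta^*||\cdot)$ together with the fact that the divergence vanishes only at $\boldsymbol\delta^*$). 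Two minor remarks: your appeal to Theorem~\ref{THclosure} in the second inclusion is unnecessary, since membership in the pointwise closure already supplies, by definition, a model sequence converging pointwise to $\boldsymbol\delta$ (the topology on $\mathbb{R}^{|\mathcal{I}|}$ is metrizable, so closure and sequential closure agree); and in the coordinatewise step one should note, as you in effect do, that convexity with a unique zero suffices because a convex function increases at least linearly away from its minimizer, so no separate coercivity argument is needed. With these observations, your argument is complete and valid for both the intensity and the probability case.
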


\begin{proof}

Let $\boldsymbol \delta^* \in \xbar{RM}_{\boldsymbol \delta}(\mathbf{A})$. 
Then, there exists a sequence $\boldsymbol \delta^{(n)} \in RM_{\boldsymbol \delta}(\mathbf{A})$ such that $\boldsymbol \delta^{(n)} \to \boldsymbol \delta^*$ pointwise, as $n \to \infty$. The function $D(\boldsymbol \delta^*|| \boldsymbol \delta^{(n)})$ is defined and continuous for $\boldsymbol \delta^{(n)} > 0$, even if some of the components of $\boldsymbol \delta^*$ are zero. Therefore, $D(\boldsymbol \delta^*|| \boldsymbol \delta^{(n)}) \to 0$, as $n \to \infty$. 

Suppose $\boldsymbol \delta^* \in \widetilde{RM}_{\boldsymbol \delta}(\mathbf{A})$, and, thus, there exists a sequence $\boldsymbol \delta^{(n)} \in RM_{\boldsymbol \delta}(\mathbf{A})$, such that:
$$D(\boldsymbol \delta^* || \boldsymbol \delta^{(n)}) \to 0 \quad \mbox{ as } n \to \infty.$$
Therefore, $D(\boldsymbol \delta^* || \boldsymbol \delta^{(n)}) \leq 1$ for all large enough $n$. Because the set $\{\boldsymbol \delta \geq \boldsymbol 0: \,\, D(\boldsymbol \delta^*|| \boldsymbol \delta) \leq 1\}$ is compact in $\mathbb{R}^{|\mathcal{I}|}$ \citep{Bregman}, there exists a subsequence $\boldsymbol \delta^{(n_k)}$ that converges pointwise to $\boldsymbol \delta^*$, as $k \to \infty$.
\end{proof}

A relational model $RM_{\boldsymbol \delta}(\mathbf{A})$ is a multiplicative family of distributions; the conditions under which the extended model $\, \xbar{RM}_{\boldsymbol \delta}(\mathbf{A})$ is also a multiplicative family are studied next.

A distribution $\boldsymbol \delta \in \bar{\mathcal{P}}$ is said to factor according to a matrix $\mathbf{A}$ if it has a representation given in (\ref{RMexpF}), with $\boldsymbol \theta = (\theta_1, \dots, \theta_J)' \geq \boldsymbol 0$. Every distribution in a relational model factors according to the model matrix. However, as the next example demonstrates, an extended model may contain distributions which do not factor according to one choice of the model matrix but do factor according to a different choice.

\vspace{5mm}

\noindent{\bf Example \ref{exConstruct}} (revisited):
Any distribution in $RM_{\boldsymbol p}(\mathbf{A})$ factors according to $\mathbf{A}$, that is,
\begin{equation}\label{Amult} 
\boldsymbol p =(\theta_1\theta_2\theta_3, \, \theta_1\theta_2, \, \theta_1, \, \theta_3, \, \theta_1\theta_2\theta_3)',
\end{equation}
for some $\theta_1, \theta_2, \theta_3 > 0$. The non-negative distribution $\boldsymbol p_0 = (1/8, 1/2, 0, 1/4, 1/8)'$ does not have the multiplicative structure (\ref{Amult}), but is in the extended model. To show the latter, take $$\theta^{(n)}_1 =  \frac{3}{3n+4}, \quad \theta^{(n)}_2 = \frac{n}{2}, \quad \theta^{(n)}_3 = \frac{1}{4}, \quad n \geq 1.$$ Then, the sequence
$$\boldsymbol p^{(n)} = \left( \frac{3n}{8(3n+4)}, \frac{3n}{2(3n+4)}, \frac{3}{3n+4}, \frac{1}{4}, \frac{3n}{8(3n+4)}\right)'$$
is in the model, and $\lim_{n \to \infty} \boldsymbol p^{(n)} =  \boldsymbol p_0$. 
On the other hand, $\boldsymbol p_0$ factors according to the matrix 
\begin{equation*}
\mathbf{A}_1 = \left( 
\begin{array}{ccccc}
0&0&1&0&0\\
1&1&0&0&1\\
1&0&0&1&1
\end{array}
\right),
\end{equation*}
which generates the same extended model as $\mathbf{A}$ does, because  $Ker(\mathbf{A}) = Ker(\mathbf{A}_1)$. 
\qed

\vspace{5mm}

A necessary and sufficient condition of the existence of such a factorization for a distribution in an extended relational model is given next.

\begin{theorem}\label{Thfactors}
A distribution $\boldsymbol \delta  \in \,\xbar{RM}_{\delta} (\mathbf{A})$  factors according to $\mathbf A$ if and only if 
for any $i_0 \notin supp(\boldsymbol \delta)$ there exists an index $j = j(i_0) \in \{1, \dots, J\}$ such that
$a_{ji} = 0$ for all $i \in supp(\boldsymbol \delta)$. \qed
\end{theorem}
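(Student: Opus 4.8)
The plan is to prove both implications by tracking which multiplicative parameters $\theta_j$ are forced to vanish in a factorization $\delta_i=\prod_{j=1}^J\theta_j^{a_{ji}}$ with $\boldsymbol\theta\ge\boldsymbol 0$. Throughout I use the convention $0^0=1$, so that a row with $a_{ji}=0$ contributes the factor $1$ regardless of whether $\theta_j$ is zero, whereas a row with $a_{ji}=1$ and $\theta_j=0$ forces the whole product to vanish. Since $\boldsymbol\delta\in\xbar{RM}_{\delta}(\mathbf A)$, Definition~\ref{ExtendDef} together with Theorem~\ref{THclosure} places $\boldsymbol\delta$ in $\mathcal X_{\mathbf A}$, so by Lemma~\ref{LemmaFacial} its support is either $\mathcal I$ or a facial set; write $F=supp(\boldsymbol\delta)$ in both cases. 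I read the stated condition as the natural separation requirement: for each $i_0\notin F$ there is a row $j=j(i_0)$ with $a_{j i_0}=1$ and $a_{ji}=0$ for all $i\in F$ (the clause $a_{j i_0}=1$ is what ties the row to $i_0$ and is needed below).

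For necessity, suppose $\boldsymbol\delta$ factors according to $\mathbf A$ and set $Z=\{\,j:\theta_j=0\,\}$. For every $i\in F$ and every $j\in Z$ one must have $a_{ji}=0$, since $a_{ji}=1$ would make the factor $\theta_j^{a_{ji}}=0$ and force $\delta_i=0$, contradicting $i\in F$. Now fix $i_0\notin F$, so $\delta_{i_0}=0$; a product of nonnegative numbers vanishes only if one factor is zero, so there is an index with $\theta_j=0$ and $a_{j i_0}=1$, i.e.\ $j\in Z$. By the previous observation this $j=j(i_0)$ satisfies $a_{ji}=0$ for all $i\in F$, which is the asserted condition.

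For sufficiency, assume the separation condition. If $F=\mathcal I$ there is nothing to prove: a positive element of $\mathcal X_{\mathbf A}$ satisfies $\mathbf D\log\boldsymbol\delta=\boldsymbol 0$, hence $\log\boldsymbol\delta\in\mathcal R(\mathbf A)$ and $\boldsymbol\delta$ factors with a strictly positive $\boldsymbol\theta$. Otherwise $F$ is a proper facial set, and I invoke Lemma~\ref{LemmaFacial} to factor the restriction: there are $\mu_1,\dots,\mu_J>0$ with $\delta_i=\prod_j\mu_j^{a_{ji}}$ for every $i\in F$. Collecting the separating rows $Z=\{\,j(i_0):i_0\notin F\,\}$ from the hypothesis, I define $\theta_j=0$ for $j\in Z$ and $\theta_j=\mu_j$ for $j\notin Z$. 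For $i\in F$ each $j\in Z$ has $a_{ji}=0$, so those rows contribute $0^0=1$ and $\prod_j\theta_j^{a_{ji}}=\prod_{j\notin Z}\mu_j^{a_{ji}}=\prod_j\mu_j^{a_{ji}}=\delta_i$; for $i_0\notin F$ the row $j(i_0)\in Z$ has $a_{j(i_0)i_0}=1$ and $\theta_{j(i_0)}=0$, so the product contains the factor $0^1=0$ and equals $\delta_{i_0}=0$. Hence $\boldsymbol\delta$ factors according to $\mathbf A$.

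The main obstacle is the sufficiency direction, and within it the step of producing a strictly positive factorization of $\boldsymbol\delta$ restricted to $F$; this is precisely where Lemma~\ref{LemmaFacial} does the real work, reducing the zero-containing problem in the extended model to the positive relational model carried by the facial support. The rest is careful bookkeeping with the convention $0^0=1$: one must check that zeroing the separating rows annihilates exactly the out-of-support cells while leaving every support cell untouched, and this is guaranteed precisely by the two halves of the hypothesis, namely $a_{j(i_0)i_0}=1$ to kill $i_0$ and $a_{j(i_0)i}=0$ on $F$ to protect the support.
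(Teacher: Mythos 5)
Your interpretive repair of the statement is the right one: as printed, the condition never mentions $i_0$, and the quantifier over $i_0$ is vacuous; the clause $a_{j(i_0)\,i_0}=1$ that you add is exactly what makes the condition coincide with the $\mathbf{A}$-feasibility gloss given after the theorem, and without it the claim would be false. Your necessity direction is complete and correct. Note also that the paper itself contains no proof of Theorem \ref{Thfactors}; it only asserts that the arguments of \cite{GeigerMeekSturm2006} and \cite{RauhMatroid} do not use the overall effect and therefore carry over. So a self-contained proof such as yours is necessarily a different route, and in principle a valuable one.

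There is, however, a genuine gap in your sufficiency direction: you ``invoke Lemma \ref{LemmaFacial} to factor the restriction,'' but none of the four parts of that lemma asserts the existence of $\mu_1,\dots,\mu_J>0$ with $\delta_i=\prod_{j}\mu_j^{a_{ji}}$ for all $i\in F$. That positive factorization over the support is the real content of the sufficiency direction; it is the same nontrivial step that opens the paper's proof of Theorem \ref{THclosure} (``First, find $\eta_1,\dots,\eta_J>0$ \dots''), and even there it is not proved but attributed to \citet[p.28]{GeigerMeekSturm2006}. So, as written, your proof rests on an unproved (though true) claim. The gap can be closed with a short argument: a vector $\boldsymbol d$ with $supp(\boldsymbol d)\subseteq F$ lies in $Ker(\mathbf{A})$ if and only if $\boldsymbol d_F\in Ker(\mathbf{A}_F)$; for such $\boldsymbol d$ the variety equation $\boldsymbol\delta^{\boldsymbol d^+}=\boldsymbol\delta^{\boldsymbol d^-}$ involves only coordinates in $F=supp(\boldsymbol\delta)$, where $\boldsymbol\delta$ is strictly positive, so taking logarithms gives $\boldsymbol d_F'\log\boldsymbol\delta_F=0$ for every $\boldsymbol d_F\in Ker(\mathbf{A}_F)$, whence $\log\boldsymbol\delta_F\in Ker(\mathbf{A}_F)^{\perp}=\mathcal{R}(\mathbf{A}_F)$, which is precisely the required factorization with positive $\mu_j$. (This argument never uses that $F$ is facial, so your case distinction between $F=\mathcal{I}$ and $F$ a proper facial set is not even needed.) With this lemma supplied, the rest of your bookkeeping --- zeroing the separating rows $Z$, the convention $0^0=1$, and the checks on and off the support --- is correct, and the proof is complete.
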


The condition of the theorem, called the $\mathbf{A}$-feasibility of $supp(\boldsymbol \delta)$, means that a generating subset which contains a zero cell of the distribution does not include any positive cell.  
For extended log-linear models, this condition was proved in \cite{GeigerMeekSturm2006} and \cite{RauhMatroid}. The proofs given did not actually rely on the presence of the overall effect and thus apply here.

Maximum likelihood estimation in the extended relational model is studied next.

\section{MLE in the extended model}\label{MLEsection}

Let $F$ be a facial set, and let $\mathbf{A}_F$ denote the sub-matrix of $\mathbf{A}$ comprising the columns with indices in $F$, and $\boldsymbol \delta_F$ denote the sub-vector of $\boldsymbol \delta$ with indices in $F$. The following result extends Theorem  9 in \cite{FienbergRinaldo2012}.

\vspace{3mm}

\begin{theorem}\label{MLEextendTHnew}
Let $\boldsymbol y$ be the vector of observed frequencies under Poisson or multinomial sampling, and let $RM_{\boldsymbol \delta}(\mathbf{A})$ be a relational model.  Consider $\boldsymbol q$ defined in (\ref{b}), and assume that $supp(\boldsymbol q) \subsetneq \mathcal{I}$.
\begin{enumerate}[(i)]
\item If for all facial sets $F$, $supp(\boldsymbol q) \not\subseteq F$, then the MLE $\tilde{\boldsymbol \delta}_{\boldsymbol y}$ under the model $\,\xbar{RM}_{\boldsymbol \delta}(\mathbf{A})$ exists, and is also the MLE under  $RM_{\boldsymbol \delta}(\mathbf{A})$: $\tilde{\boldsymbol \delta}_{\boldsymbol y} = \hat{\boldsymbol \delta}_{\boldsymbol y}$. Otherwise, 
\item Let $F$ be the smallest facial set such that $supp(\boldsymbol q) \subseteq F$.  Then the MLE $\hat{\boldsymbol \delta}_{\boldsymbol y, F}$ of ${\boldsymbol \delta}_F$ under  the model ${RM}_{\boldsymbol \delta_F}(\mathbf{A}_F)$ exists, and  $\tilde{\boldsymbol \delta}_{\boldsymbol y} = (\hat{\boldsymbol \delta}_{\boldsymbol y, F}, \boldsymbol 0_{\mathcal{I}\setminus F})$ is the MLE under the model $\,\xbar{RM}_{\boldsymbol \delta}(\mathbf{A})$.
\item The MLE $\tilde{\boldsymbol \delta}_{\boldsymbol y}$ under $\,\xbar{RM}_{\boldsymbol \delta}(\mathbf{A})$ always exists and is the unique point of $\mathcal{X}_{\mathbf{A}}$ which satisfies:
\begin{align}
&\mathbf{A}\boldsymbol{\delta} = \gamma \mathbf{A} \boldsymbol q, \mbox{ for some } \gamma > 0; \label{A} \\
&\boldsymbol 1'\boldsymbol \delta = 1 \qquad (\mbox{only for } \boldsymbol \delta \equiv \boldsymbol p). \nonumber
\end{align}
\end{enumerate}
\end{theorem}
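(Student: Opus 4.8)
The plan is to reduce the likelihood maximization over the whole extended model to a maximization over a single stratum, the one determined by the support of the data, and to pin down that stratum through the facial structure of the cone $C_{\mathbf A}$ from \eqref{IntCone}. First I would record the guiding dichotomy: a point $\boldsymbol t\in C_{\mathbf A}$ lies in $relint(C_{\mathbf A})$ exactly when it is in no proper face, and by the supporting-hyperplane description of facial sets (there is a $\boldsymbol c$ with $\boldsymbol c'\boldsymbol a_i=0$ for $i\in F$ and $\boldsymbol c'\boldsymbol a_i>0$ for $i\notin F$) the vector $\mathbf A\boldsymbol q=\sum_{i\in supp(\boldsymbol q)}q_i\boldsymbol a_i$ lies on a proper face if and only if $supp(\boldsymbol q)$ is contained in some facial set. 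Thus the hypotheses of (i) and (ii) are exactly the alternatives $\mathbf A\boldsymbol q\in relint(C_{\mathbf A})$ and $\mathbf A\boldsymbol q$ on a proper face. I would also record the decisive elementary observation that, under either sampling scheme, the log-likelihood of any $\boldsymbol\delta$ in the extended model equals $-\infty$ unless $supp(\boldsymbol q)\subseteq supp(\boldsymbol\delta)$; since by Lemma \ref{LemmaFacial} every $\boldsymbol\delta\in\mathcal X_{\mathbf A}$ has support $\mathcal I$ or a facial set, every serious candidate for the MLE has support a facial set containing $supp(\boldsymbol q)$, or $\mathcal I$.

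For part (i), no facial set contains $supp(\boldsymbol q)$, so the only admissible support is $\mathcal I$, and the supremum of the likelihood over $\xbar{RM}_{\boldsymbol\delta}(\mathbf A)$ equals its supremum over the positive model $RM_{\boldsymbol\delta}(\mathbf A)$. Because $\mathbf A\boldsymbol q\in relint(C_{\mathbf A})$ supplies a positive $\boldsymbol z$ with $\mathbf A\boldsymbol z=\mathbf A\boldsymbol q$, Theorem \ref{MLEnoOvEff} delivers the positive MLE $\hat{\boldsymbol\delta}_{\boldsymbol y}$, which therefore also maximizes over the extension, giving $\tilde{\boldsymbol\delta}_{\boldsymbol y}=\hat{\boldsymbol\delta}_{\boldsymbol y}$. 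For part (ii), I would first note that the facial sets containing $supp(\boldsymbol q)$ are closed under intersection (the faces of a polyhedral cone form a lattice), so a unique smallest such $F$ exists and is contained in every admissible facial support. Using that the facial sets of $\mathbf A_F$ are precisely the facial sets of $\mathbf A$ lying inside $F$, minimality of $F$ means $supp(\boldsymbol q)$ sits in no proper facial subset of $F$; by the dichotomy applied to $\mathbf A_F$ this says $\mathbf A_F\boldsymbol q_F\in relint(C_{\mathbf A_F})$, so Theorem \ref{MLEnoOvEff} yields the positive sub-model MLE $\hat{\boldsymbol\delta}_{\boldsymbol y,F}$. Since $F$ is facial and $\hat{\boldsymbol\delta}_{\boldsymbol y,F}>\boldsymbol 0$ solves \eqref{MLEsys} for $\mathbf A_F$, Lemma \ref{LemmaFacial} places the zero-extension $(\hat{\boldsymbol\delta}_{\boldsymbol y,F},\boldsymbol 0_{\mathcal I\setminus F})$ in $\mathcal X_{\mathbf A}$, and because $supp(\boldsymbol q)\subseteq F$ its likelihood equals the maximal sub-model likelihood.

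The core of the argument, which simultaneously supplies the identification in (ii) and all of (iii), is to show that no competing support can do better. I would obtain a maximizer from upper semicontinuity of the likelihood together with compactness of $\mathcal X_{\mathbf A}\cap\Delta_{|\mathcal I|-1}$ in the multinomial case and coercivity of the Poisson likelihood (the $-\sum_i\lambda_i$ term dominates) in the intensity case. Any maximizer $\tilde{\boldsymbol\delta}_{\boldsymbol y}$ has some support $G$, a facial set containing $supp(\boldsymbol q)$ or $\mathcal I$; it is positive on $G$ and lies in the open $G$-stratum $RM_{\boldsymbol\delta_G}(\mathbf A_G)\subseteq\xbar{RM}_{\boldsymbol\delta}(\mathbf A)$, where the full and sub-model likelihoods agree because $y_i=0$ off $G$. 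A global maximizer lying in this subset is its maximizer, hence the sub-model MLE; by Theorem \ref{MLEnoOvEff} this forces $\mathbf A_G\boldsymbol q_G\in relint(C_{\mathbf A_G})$, i.e. $supp(\boldsymbol q)$ in no proper facial subset of $G$, which together with $F\subseteq G$ forces $G=F$. Thus every maximizer restricts on $F$ to the unique $\hat{\boldsymbol\delta}_{\boldsymbol y,F}$ and vanishes off $F$, yielding existence and uniqueness of $\tilde{\boldsymbol\delta}_{\boldsymbol y}$.

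For the characterization in (iii) I would run the same mechanism from the equations rather than the likelihood: if $\boldsymbol\delta\in\mathcal X_{\mathbf A}$ satisfies \eqref{A}, testing the functional $\boldsymbol c$ defining its facial support $G$ against $\mathbf A\boldsymbol\delta=\gamma\mathbf A\boldsymbol q$ gives $0=\gamma\sum_{i\notin G}q_i(\boldsymbol c'\boldsymbol a_i)$, hence $supp(\boldsymbol q)\subseteq G$; then $\boldsymbol\delta_G$ solves \eqref{MLEsys} for $\mathbf A_G$, which again forces $G=F$ and $\boldsymbol\delta=\tilde{\boldsymbol\delta}_{\boldsymbol y}$, so $\tilde{\boldsymbol\delta}_{\boldsymbol y}$ is the unique point of $\mathcal X_{\mathbf A}$ obeying \eqref{A}. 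I expect the main obstacle to be this stratification step rather than any single computation: precisely, using Lemma \ref{LemmaFacial} to control the facial sets of the sub-matrices $\mathbf A_G$, to guarantee that a positive sub-model solution with facial support extends by zeros to a point of $\mathcal X_{\mathbf A}$, and to cleanly convert ``the sub-model MLE would have to exist'' into the contradiction $G\neq F$ via Theorem \ref{MLEnoOvEff}.
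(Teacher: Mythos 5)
Your proposal is correct, but its core mechanism is genuinely different from the paper's. For parts (ii)--(iii) the paper argues analytically along the explicit approximating sequences constructed in the proof of Theorem \ref{THclosure}: it computes the full log-likelihood of those sequence elements, shows it is dominated by the sub-model likelihood $l_F(\boldsymbol q_F,\cdot)$ (in the multinomial case via the inequality $\theta_1/(\theta_1^{(n)}n^{c_1})>1$, which is exactly where the absence of the overall effect bites, since one cannot simply renormalize a sub-model point), and then passes to the limit; uniqueness in (iii) is obtained from convexity of the log-likelihood, citing Lauritzen. You instead establish existence of a maximizer abstractly (compactness of $\mathcal{X}_{\mathbf A}\cap\Delta_{|\mathcal I|-1}$, coercivity of the Poisson likelihood, upper semicontinuity) and then run a stratification-by-support argument: any maximizer, and in (iii) any point of $\mathcal{X}_{\mathbf A}$ satisfying \eqref{A}, restricts on its support $G$ to a positive solution of \eqref{MLEsys} for $\mathbf{A}_G$, so Theorem \ref{MLEnoOvEff} together with Lemma \ref{LemmaFacialTau} applied to $\mathbf{A}_G$ forces $G=F$ by minimality of $F$. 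This buys two things the paper does not make explicit: a uniform treatment of competing supports $G\supsetneq F$ and $G=\mathcal I$ (the paper's likelihood bound, as written, only covers points supported inside $F$ and the sequences converging to them), and a direct proof of the uniqueness claim in (iii) without appeal to convexity. What it costs is the obligation to verify the supporting facts you lean on: upper semicontinuity and coercivity, the correspondence that facial sets of $\mathbf{A}_G$ are exactly the facial sets of $\mathbf{A}$ contained in $G$ (true; the nontrivial direction adds a large multiple of the functional defining $G$), and that the zero-extension of a positive point of $RM_{\boldsymbol\delta_G}(\mathbf{A}_G)$ lies in $\mathcal{X}_{\mathbf A}$ --- for which the correct citation is Lemma \ref{FaceVariety} rather than Lemma \ref{LemmaFacial}. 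The paper's route, by contrast, is constructive and reuses the closure machinery already built, exhibiting explicitly how the likelihood deficit vanishes along the approximating sequences.
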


The vector $\tilde{\boldsymbol \delta}_{\boldsymbol y}$ is called the extended MLE of $\boldsymbol \delta$ under the relational model. The proof is given in the Appendix. The following example illustrates the theorem.

\vspace{5mm}

\noindent{\bf Example \ref{exConstruct}} (revisited):

Notice first that $F =\{1,2,4,5\}$ is a facial set of $\mathbf{A}$. The support of the observed distribution  $supp(\boldsymbol q) =\{1,2,4\}$ is a subset of $F$. 
Therefore, the MLE of $\boldsymbol q$ exists in the closure of the relational model. As it was shown earlier, the distribution $\boldsymbol p_0 =  (1/8, 1/2, 0, 1/4, 1/8)'$ is in $\xbar{RM}_{\boldsymbol p}(\mathbf{A})$. As $\mathbf{A}\boldsymbol p_0 =  7/8 \mathbf{A}\boldsymbol q$, the extended MLE of $\boldsymbol q$ is $\boldsymbol p_0$.
\qed

\vspace{3mm}

The next theorem establishes a condition under which the maximum likelihood estimates of the model parameters under an extended relational model exist:

\begin{theorem}\label{MLEmodelParam}
Assume that the MLE $\hat{\boldsymbol \delta}$ under the extended relational model $\, \xbar{RM}_{\boldsymbol \delta}(\mathbf{A})$ exists.  The  maximum likelihood estimates of the model parameters $\boldsymbol \theta$ exist if and only if  $supp(\hat{\boldsymbol \delta})$ is $\mathbf{A}$-feasible.
\end{theorem}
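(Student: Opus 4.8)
The plan is to reduce the statement to the factorization criterion already established as Theorem \ref{Thfactors}, after first making precise what it means for the maximum likelihood estimates of the multiplicative parameters $\boldsymbol\theta$ to exist. By Theorem \ref{MLEextendTHnew}, the cell-parameter MLE $\hat{\boldsymbol\delta}$ is the unique maximizer of the likelihood over the entire extended model $\xbar{RM}_{\boldsymbol\delta}(\mathbf{A})$. The parameters $\boldsymbol\theta$ enter the likelihood only through the map $\boldsymbol\theta \mapsto \boldsymbol\delta(\boldsymbol\theta)$ of (\ref{RMexpF}), now extended to $\boldsymbol\theta \geq \boldsymbol 0$ under the conventions $0^0 = 1$ and $0^1 = 0$. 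An MLE $\hat{\boldsymbol\theta}$ of the model parameters is, accordingly, a vector $\hat{\boldsymbol\theta} \geq \boldsymbol 0$ at which the likelihood, viewed as a function of $\boldsymbol\theta$, attains its supremum. I would therefore show that such a $\hat{\boldsymbol\theta}$ exists if and only if $\hat{\boldsymbol\delta}$ itself factors according to $\mathbf{A}$, and then invoke Theorem \ref{Thfactors}.

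First I would record two facts about the extended parameter map. For every $\boldsymbol\theta \geq \boldsymbol 0$ the resulting $\boldsymbol\delta(\boldsymbol\theta)$ lies in $\xbar{RM}_{\boldsymbol\delta}(\mathbf{A}) = \mathcal{X}_{\mathbf{A}}$: on the support, the identity $\sum_i d_i \log \delta_i(\boldsymbol\theta) = \sum_j (\mathbf{A}\boldsymbol d)_j \log \theta_j = 0$ for every $\boldsymbol d \in Ker(\mathbf{A})$ yields the defining relations of $\mathcal{X}_{\mathbf{A}}$, while the zero pattern is handled by the conventions above. Moreover, by Theorem \ref{THclosure} the image of the map over $\boldsymbol\theta > \boldsymbol 0$, namely the original model $RM_{\boldsymbol\delta}(\mathbf{A})$, is dense in $\xbar{RM}_{\boldsymbol\delta}(\mathbf{A})$ in the topology of pointwise convergence. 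Since the log-likelihood $\ell$ is continuous on the set of distributions whose support contains $supp(\boldsymbol q)$, these two facts together give $\sup_{\boldsymbol\theta \geq \boldsymbol 0} \ell(\boldsymbol\delta(\boldsymbol\theta)) = \sup_{\boldsymbol\delta \in \xbar{RM}_{\boldsymbol\delta}(\mathbf{A})} \ell(\boldsymbol\delta) = \ell(\hat{\boldsymbol\delta})$.

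With the value of the supremum pinned down, the equivalence follows. If $\hat{\boldsymbol\delta}$ factors via some $\hat{\boldsymbol\theta} \geq \boldsymbol 0$, then $\ell(\boldsymbol\delta(\hat{\boldsymbol\theta})) = \ell(\hat{\boldsymbol\delta})$ equals the supremum, so $\hat{\boldsymbol\theta}$ is an MLE of the parameters. Conversely, if an MLE $\hat{\boldsymbol\theta}$ of the parameters exists, then $\boldsymbol\delta(\hat{\boldsymbol\theta}) \in \xbar{RM}_{\boldsymbol\delta}(\mathbf{A})$ attains $\ell(\hat{\boldsymbol\delta})$, and the uniqueness of the maximizer over the extended model, Theorem \ref{MLEextendTHnew}(iii), forces $\boldsymbol\delta(\hat{\boldsymbol\theta}) = \hat{\boldsymbol\delta}$; hence $\hat{\boldsymbol\delta}$ factors according to $\mathbf{A}$. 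Thus the model-parameter MLE exists precisely when $\hat{\boldsymbol\delta}$ factors, and Theorem \ref{Thfactors} identifies the latter with the $\mathbf{A}$-feasibility of $supp(\hat{\boldsymbol\delta})$, which completes the argument.

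I anticipate the main obstacle to be the careful treatment of the boundary where parameters and cells vanish: one must verify that the extended map $\boldsymbol\theta \mapsto \boldsymbol\delta(\boldsymbol\theta)$ stays inside $\mathcal{X}_{\mathbf{A}}$ and that $\ell$ remains continuous there, so that the supremum computed over the dense positive part is genuinely attained at $\hat{\boldsymbol\delta}$ exactly when the factorization is available. Once this identification of ``a parameter-MLE exists'' with ``$\hat{\boldsymbol\delta}$ factors'' is secured, the feasibility characterization is supplied in full by Theorem \ref{Thfactors}.
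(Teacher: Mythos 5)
Your proposal is correct and takes essentially the same route as the paper: both reduce the statement to Theorem \ref{Thfactors} and use the uniqueness of the MLE under $\xbar{RM}_{\boldsymbol \delta}(\mathbf{A})$ (Theorem \ref{MLEextendTHnew}) to identify ``a parameter MLE $\hat{\boldsymbol\theta} \geq \boldsymbol 0$ exists'' with ``$\hat{\boldsymbol\delta}$ factors according to $\mathbf{A}$.'' The paper's own proof is a compressed version of yours, leaving implicit the points you verify explicitly --- that the extended map $\boldsymbol\theta \mapsto \boldsymbol\delta(\boldsymbol\theta)$ stays in $\mathcal{X}_{\mathbf{A}}$ and that, by density (Theorem \ref{THclosure}) and continuity of the log-likelihood, its supremum over $\boldsymbol\theta \geq \boldsymbol 0$ equals $\ell(\hat{\boldsymbol\delta})$.
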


\begin{proof}

By Theorem \ref{Thfactors}, the distribution $\hat{\boldsymbol \delta}$ factors according to $\mathbf{A}$ if and only if $supp(\hat{\boldsymbol \delta})$ is $\mathbf{A}$-feasible. In this case  $\hat{\delta}(i) = \prod_{j=1}^J \hat{\theta}_j^{a_{ij}}$ for all $i \in \mathcal{I}$, and, by uniqueness, $\hat{\boldsymbol \theta}=(\hat{\theta}_1, \dots, \hat{\theta}_J)'$ are the maximum likelihood estimates of the model parameters.
\end{proof}

If $supp(\hat{\boldsymbol \delta})$ is not $\mathbf{A}$-feasible, then  $\hat{\boldsymbol \delta}$ is the limit of a sequence of the positive distributions in the model which factor according to $\mathbf{A}$. Although the cell parameters of these distributions can be factored using some model parameters $\boldsymbol \theta^{(n)} > \boldsymbol 0$, the limits of individual components of  $\boldsymbol \theta^{(n)}$, as $n \to 0$, may not exist. In the case of the log-linear models this fact was illustrated by \cite{RinaldoTR}. The same situation occurs in the construction of Example \ref{exConstruct}, where $\theta_2^{(n)} \to \infty$ as $n \to \infty$.

As Theorem  \ref{MLEextendTHnew} implies, the MLE in the extended relational model can be obtained using the MLE in a non-extended model.  \cite{KRipf1} proposed a generalized iterative scaling procedure, called G-IPF, for computing the MLE under (non-extended) relational models. The algorithm relies on the condition that $\mathbf{A}\boldsymbol q > \boldsymbol 0$. Every iteration of this procedure implements the following algorithm, IPF($\gamma$), for a specific value of $\gamma$.

\vspace{1mm}

\begin{center} \textbf{IPF($\gamma$) Algorithm:} \end{center}

\noindent {Set} $n = 0$; \,\, ${\delta}_{\gamma}^{(0)}(i) = 1$ for all $i \in \mathcal{I}$, \,\, and proceed as follows.
\begin{enumerate}
\item[] {\tt Step 1}: {Find} $j \in \{1,2,\dots,J\}$, such that $n+1 \equiv j \mbox{ mod } J$;
\item[] {\tt Step 2}: {Compute}  
\begin{eqnarray}  
\delta_{\gamma}^{(n+1)}(i) &=& \delta_{\gamma}^{(n)}(i) \left(\gamma \frac{A_{j}\boldsymbol{q}}{A_{j}\boldsymbol{\delta}_{\gamma}^{(n)}}\right)^{a_{ji}} \,\, \mbox{for all } i \in \mathcal{I}. \label{RipfGamma} 
\end{eqnarray}
\item[] {\tt Step 3}: While  $\gamma A_{j}\boldsymbol{q} \neq A_{j}\boldsymbol{\delta}_{\gamma}^{(n+1)}$ for at least one $j$,  set $n = n+1$, go to {\tt Step 1}.
\item[] {\tt Step 4}: Set $\boldsymbol{\delta}_{\gamma}^{*}=\boldsymbol{\delta}_{\gamma}^{(n)}$, and finish. \qed
\end{enumerate}

\vspace{3mm}

\vspace{2mm}
The G-IPF algorithm commences with executing IPF($\gamma$) for $\gamma = 1$, which is sufficient to compute the MLE in the case of probabilities with the overall effect and in the case of intensities. If in the case of probabilities the overall effect is not present, G-IPF updates $\gamma$  and calls IPF($\gamma$) again. The procedure is repeated until, for some $\gamma$, the limit vector $\boldsymbol{\delta}_{\gamma}^{*}$ sums to $1$, and thus is a parameter of a non-negative probability distribution. The variant of G-IPF, which employs the bisection method to update $\gamma$, is described in the following.

\begin{center} \textbf{G-IPF Algorithm:} \end{center}


\begin{itemize}
\item[] If $\boldsymbol \delta \equiv \boldsymbol \lambda$,  compute $\tilde{\boldsymbol{\lambda}}$ using IPF(1), and finish.
\item[] If $\boldsymbol \delta \equiv \boldsymbol p$, compute $\boldsymbol{p}^*$ using IPF($1$). \\
If $\boldsymbol 1\boldsymbol{p}^* = 1$, set $\tilde{\boldsymbol p} = \boldsymbol p^*$, and finish. 
Otherwise, \\
{compute} $\gamma_L = (\boldsymbol 1'\mathbf{A}\boldsymbol q)^{-1}$, $\gamma_R = \mbox{min } \{1/A_1\boldsymbol q, \dots, 1/A_J\boldsymbol q\}$, and proceed as follows:
\begin{itemize}
\item[] {\tt Step 1}: {Find} $\boldsymbol \delta_{(\gamma_L +\gamma_R)/2}^*$ \hspace{2mm} using IPF($\gamma$). \\
\item[] {\tt Step 2}: While   $\boldsymbol 1\boldsymbol \delta_{(\gamma_L +\gamma_R)/2}^* \ne 1$, 
\begin{itemize}
\item[] \qquad {if }  $\boldsymbol 1\boldsymbol{\delta}_{(\gamma_{L} + \gamma_{R})/2}^* < 1$,  {set } $\gamma_{L}  = \frac{\gamma_{L} + \gamma_{R}}{2}$,  
\item[] \qquad {else } {set } $\gamma_{R}  = \frac{\gamma_{L} + \gamma_{R}}{2}$;
\item[] \qquad {go to {\tt Step 1}}.\\
\end{itemize} 
\item[] {\tt Step 3}: Set  $\tilde{\boldsymbol{p}}= \boldsymbol{\delta}_{(\gamma_L +\gamma_R)/2}^*$, and finish. \qed
\end{itemize}
\end{itemize}

\vspace{3mm}

If $\mathbf{A}\boldsymbol q > \boldsymbol 0$, the G-IPF algorithm applies to the extended case directly. 

\begin{theorem}\label{ThGammaAll}
Let $\boldsymbol y$ be the vector of observed frequencies under Poisson or multinomial sampling, with $\boldsymbol q$ defined in (\ref{b}), and let $RM_{\boldsymbol \delta}(\mathbf{A})$ be a relational model. Assume that $\mathbf{A} \boldsymbol q > \boldsymbol 0$. The G-IPF algorithm converges to the MLE $\tilde{\boldsymbol \delta}_{\boldsymbol y}$  under $\,\xbar{RM}_{\boldsymbol \delta}(\mathbf{A})$.
\end{theorem}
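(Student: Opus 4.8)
The plan is to read IPF($\gamma$) as cyclic Bregman projection and then to reduce the probability case to a one-parameter search over $\gamma$. Fix $\gamma>0$ and set $L_\gamma=\{\boldsymbol\delta\geq\boldsymbol 0:\mathbf{A}\boldsymbol\delta=\gamma\mathbf{A}\boldsymbol q\}$; since $\mathbf{A}\boldsymbol q>\boldsymbol 0$, the point $\gamma\boldsymbol q\geq\boldsymbol 0$ lies in $L_\gamma$, so $L_\gamma\neq\emptyset$. A short calculation with the update (\ref{RipfGamma}) shows that after the step with index $j$ one has $A_j\boldsymbol\delta_\gamma^{(n+1)}=\gamma A_j\boldsymbol q$ exactly, and that this step is precisely the Bregman projection of $\boldsymbol\delta_\gamma^{(n)}$ onto the hyperplane $H_j=\{\boldsymbol\delta:A_j\boldsymbol\delta=\gamma A_j\boldsymbol q\}$ for the divergence (\ref{BDdef}). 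Thus IPF($\gamma$) is cyclic projection onto $H_1,\dots,H_J$ started at $\boldsymbol\delta_\gamma^{(0)}=\boldsymbol 1$, and every iterate is strictly positive and lies in $RM_{\boldsymbol\delta}(\mathbf{A})$, because each update adds a multiple of a row of $\mathbf{A}$ to $\log\boldsymbol\delta_\gamma^{(n)}$. I will first show that IPF($\gamma$) converges to the unique point $\boldsymbol\delta_\gamma^*\in\mathcal{X}_{\mathbf{A}}\cap L_\gamma$, and then that G-IPF selects the value of $\gamma$ enforcing normalization.

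For convergence I would use the generalized Pythagorean identity: for any feasible $\boldsymbol z\in L_\gamma$, $D(\boldsymbol z||\boldsymbol\delta_\gamma^{(n)})=D(\boldsymbol z||\boldsymbol\delta_\gamma^{(n+1)})+D(\boldsymbol\delta_\gamma^{(n+1)}||\boldsymbol\delta_\gamma^{(n)})$. Taking $\boldsymbol z=\gamma\boldsymbol q$ (whose support is contained in that of each positive iterate, so the divergence is finite) shows $D(\gamma\boldsymbol q||\boldsymbol\delta_\gamma^{(n)})$ is non-increasing, so the iterates remain in a sublevel set $\{\boldsymbol\delta\geq\boldsymbol 0:D(\gamma\boldsymbol q||\boldsymbol\delta)\leq C\}$, which is compact by the result of \cite{Bregman} already invoked in the proof of Theorem \ref{ThClosureBregman}. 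Summing the identity telescopes to $\sum_n D(\boldsymbol\delta_\gamma^{(n+1)}||\boldsymbol\delta_\gamma^{(n)})<\infty$, so consecutive iterates coalesce and every subsequential limit satisfies all $J$ constraints, i.e.\ lies in $L_\gamma$; being a pointwise limit of points of $RM_{\boldsymbol\delta}(\mathbf{A})$, it also lies in $cl(RM_{\boldsymbol\delta}(\mathbf{A}))=\mathcal{X}_{\mathbf{A}}$ by Theorem \ref{THclosure}. By Theorem \ref{MLEextendTHnew}(iii) the intersection $\mathcal{X}_{\mathbf{A}}\cap L_\gamma$ is the singleton $\{\boldsymbol\delta_\gamma^*\}$ (the extended intensity MLE for data $\gamma\boldsymbol q$), so all subsequential limits agree and the whole sequence converges to $\boldsymbol\delta_\gamma^*$. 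The genuinely new feature here, and the main obstacle relative to \cite{KRipf1}, is that $\boldsymbol\delta_\gamma^*$ may lie on the boundary: its support is then a proper facial set $F$ and convergence is to this boundary point through strictly positive iterates, so one cannot argue inside a fixed open exponential family as in the positive case.

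This settles the intensity case, since G-IPF runs IPF($1$), whose limit $\boldsymbol\delta_1^*\in\mathcal{X}_{\mathbf{A}}$ satisfies $\mathbf{A}\boldsymbol\delta_1^*=\mathbf{A}\boldsymbol q$ and hence equals $\tilde{\boldsymbol\lambda}_{\boldsymbol y}$ by Theorem \ref{MLEextendTHnew}(iii). For probabilities it remains to justify the bisection. A key observation is that $supp(\boldsymbol\delta_\gamma^*)$ is the same facial set $F$ for every $\gamma>0$, since membership of $\gamma\mathbf{A}\boldsymbol q$ in a face of the cone $C_{\mathbf{A}}$ is invariant under positive scaling; here $F$ is the smallest facial set containing $supp(\boldsymbol q)$, so that $\mathbf{A}\boldsymbol q\in relint(C_{\mathbf{A}_F})$. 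Restricting to $F$ and replacing $\mathbf{A}_F$ by a full-row-rank representative of its row space yields a relational model $RM_{\boldsymbol\delta_F}(\mathbf{A}_F)$ in which $\mathbf{A}_F\boldsymbol q_F=\mathbf{A}\boldsymbol q$ is interior, so the sub-model MLE is strictly positive on $F$ and Lemmas 3.5 and 3.6 of \cite{KRipf1} apply. Because the $F$-part of $\boldsymbol\delta_\gamma^*$ is exactly this sub-model MLE and the remaining components vanish, $g(\gamma):=\boldsymbol 1'\boldsymbol\delta_\gamma^*$ equals the sub-model total; hence $g$ is continuous and strictly increasing, with a unique normalizing value by Lemma 3.6 of \cite{KRipf1}.

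Finally I would check that the interval computed by G-IPF brackets this value. Summing the constraint rows at $\gamma_L=(\boldsymbol 1'\mathbf{A}\boldsymbol q)^{-1}$ gives $\sum_i c_i\,\delta_i=\boldsymbol 1'\mathbf{A}\boldsymbol\delta_{\gamma_L}^*=1$ with $c_i=\boldsymbol 1'\boldsymbol a_i\geq 1$ (no zero column), so $g(\gamma_L)=\sum_i\delta_i\leq\sum_i c_i\delta_i=1$; at $\gamma_R=\min_j 1/(A_j\boldsymbol q)$, if $j_0$ attains the minimum then $A_{j_0}\boldsymbol\delta_{\gamma_R}^*=\gamma_R A_{j_0}\boldsymbol q=1$, and since $A_{j_0}$ is a $0$--$1$ vector, $g(\gamma_R)\geq A_{j_0}\boldsymbol\delta_{\gamma_R}^*=1$. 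Thus $g(\gamma_L)\leq 1\leq g(\gamma_R)$, and by monotonicity and continuity the bisection converges to the unique $\gamma^*\in[\gamma_L,\gamma_R]$ with $g(\gamma^*)=1$. The corresponding limit $\boldsymbol\delta_{\gamma^*}^*$ lies in $\mathcal{X}_{\mathbf{A}}$, satisfies both (\ref{A}) and $\boldsymbol 1'\boldsymbol\delta_{\gamma^*}^*=1$, and therefore equals the extended probability MLE $\tilde{\boldsymbol p}_{\boldsymbol y}$ by Theorem \ref{MLEextendTHnew}(iii), completing the plan.
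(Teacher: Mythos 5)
Your proposal is correct, but it follows a genuinely different route than the paper for the core convergence step. The paper's proof is much shorter: it observes that positivity of $\mathbf{A}\boldsymbol q$ keeps the iterates (\ref{RipfGamma}) strictly positive, cites \citet[Theorem 3.2]{KRipf1} for convergence of IPF($\gamma$) and the property $\mathbf{A}\boldsymbol\delta_\gamma^*=\gamma\mathbf{A}\boldsymbol q$, notes $\boldsymbol\delta_\gamma^{(n)}\in\mathcal{X}_{\mathbf{A}}$ with $\mathcal{X}_{\mathbf{A}}$ closed, and then identifies the limit via Theorem \ref{MLEextendTHnew}(\emph{iii}), splitting the probability case into overall effect present (where $\boldsymbol 1'=\boldsymbol k'\mathbf{A}$ forces $\boldsymbol 1'\boldsymbol\delta_1^*=1$, so IPF($1$) suffices) and absent (where it simply asserts that the $\gamma$-updates terminate at a normalizing $\gamma^*$). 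You instead re-derive convergence of IPF($\gamma$) from scratch as cyclic Bregman projection: the exact Pythagorean identity for projections onto the affine sets $H_j$, monotone decrease of $D(\gamma\boldsymbol q||\boldsymbol\delta_\gamma^{(n)})$, compactness of sublevel sets from \cite{Bregman}, and identification of all subsequential limits as the unique point of $\mathcal{X}_{\mathbf{A}}\cap L_\gamma$ — again via Theorem \ref{MLEextendTHnew}(\emph{iii}), which both proofs share as the final step. What your approach buys is self-containedness exactly where the extension beyond \cite{KRipf1} is delicate (the limit may sit on the boundary of the cone, with support a proper facial set, which you flag explicitly rather than asserting that the positive-data proof ``applies''), plus an explicit justification of the bisection that the paper omits entirely: your verification that $g(\gamma_L)\le 1\le g(\gamma_R)$ via column sums at $\gamma_L=(\boldsymbol 1'\mathbf{A}\boldsymbol q)^{-1}$ and the binding row at $\gamma_R=\min_j 1/(A_j\boldsymbol q)$ is a genuine addition, as is the observation that the support facial set of $\boldsymbol\delta_\gamma^*$ is $\gamma$-invariant. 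What the paper's route buys is brevity and the clean algebraic treatment of the overall-effect subcase, which your bisection argument subsumes but does not isolate. Two places where you still lean on outside results or gloss: continuity and strict monotonicity of $g(\gamma)=\boldsymbol 1'\boldsymbol\delta_\gamma^*$ is delegated to Lemmas 3.5--3.6 of \cite{KRipf1} applied to the sub-model on $F$ (the same kind of external reliance you avoid elsewhere), and ``consecutive iterates coalesce'' from $\sum_n D(\boldsymbol\delta_\gamma^{(n+1)}||\boldsymbol\delta_\gamma^{(n)})<\infty$ deserves one line (e.g., $D(\boldsymbol t||\boldsymbol u)\ge\sum_i(\sqrt{t_i}-\sqrt{u_i})^2$ on the compact sublevel set); neither is a gap in substance.
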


\begin{proof}
As $\mathbf{A} \boldsymbol q > \boldsymbol 0$, the IPF-sequence $\boldsymbol{\delta}_{\gamma}^{(n)}$ defined in (\ref{RipfGamma}) is positive, and the proof of its convergence in \citet[Theorem 3.2]{KRipf1} applies. In particular, the limit of the sequence, $\boldsymbol{\delta}_{\gamma}^{*}$, satisfies $\mathbf{A}\boldsymbol{\delta}_{\gamma}^* = \gamma \mathbf{A} \boldsymbol q$, and, for an arbitrary kernel basis matrix $\mathbf{D}$, $\mathbf{D} \mbox{log } \boldsymbol{\delta}_{\gamma}^{(n)} = \boldsymbol 0$  for all $n \in \mathbb{Z}_{\geq 0}$. The latter implies that $\, \boldsymbol{\delta}_{\gamma}^{(n)} \in \mathcal{X}_{\mathbf{A}}$ for all $n$, and, as $\mathcal{X}_{\mathbf{A}}$ is a closed set in $\mathbb{R}^{|\mathcal{I}|}_{\geq 0}$, $\boldsymbol{\delta}_{\gamma}^{*} \in \mathcal{X}_{\mathbf{A}}$.

Let $\boldsymbol{\delta}_{1}^{*}$ be the limit vector obtained from IPF($1$), and thus $\boldsymbol{\delta}_{1}^{*} \in \mathcal{X}_{\mathbf{A}}$ and $\mathbf{A} \boldsymbol{\delta}_{1}^* = \mathbf{A} \boldsymbol q$. 

Suppose $\boldsymbol \delta \equiv \boldsymbol \lambda$.  Then,  as (\ref{A}) holds  for $\boldsymbol{\delta}_{1}^{*}$ with $\gamma = 1$, Theorem \ref{MLEextendTHnew}({\it{iii}}) implies that  $\boldsymbol{\delta}_{1}^{*}$ is equal to the extended MLE: $\tilde{\boldsymbol \delta}_{\boldsymbol y} = \boldsymbol{\delta}_{1}^{*}$.

Suppose $\boldsymbol \delta \equiv \boldsymbol p$. First, assume that the overall effect is present, and thus  there exists a $\boldsymbol k \in \mathbb{R}^J_{\geq 0}$, such that $\boldsymbol 1' = \boldsymbol k' \mathbf{A}$. The latter yields that $\boldsymbol 1' \boldsymbol \delta _1^* = \boldsymbol k' \mathbf{A} \boldsymbol \delta _1^* =   \boldsymbol k' \mathbf{A} \boldsymbol q = \boldsymbol 1' \boldsymbol q = 1$. Therefore, (\ref{A}) holds  for $\boldsymbol{\delta}_{1}^{*}$ with $\gamma = 1$. By Theorem \ref{MLEextendTHnew}({\it{iii}}), $\tilde{\boldsymbol \delta}_{\boldsymbol y} = \boldsymbol{\delta}_{1}^{*}$. 

Now, assume that the overall effect is not present. In this situation, G-IPF updates $\gamma$  and calls IPF($\gamma$); and this procedure is repeated until a $\gamma^*$ for which the IPF-limit $\boldsymbol{\delta}_{\gamma^*}^{*}$ sums to $1$ is found. Then, $\boldsymbol{\delta}_{\gamma^*}^{*}$ satisfies (\ref{A}) with $\gamma = \gamma^*$. By Theorem \ref{MLEextendTHnew}({\it{iii}}), $\tilde{\boldsymbol \delta}_{\boldsymbol y} = \boldsymbol{\delta}_{1}^{*}$. 
\end{proof}

Next, it is shown how G-IPF can be used if the condition $\mathbf{A}\boldsymbol q > \boldsymbol 0$ does not hold. Let $\mathcal{J}_{0} = \{j \in \{1, \dots, J\}: \,\,  A_j \boldsymbol q = 0\}$, and assume that $\mathcal{J}_{0} \neq \emptyset$. Further, let $\mathcal{I}_{0} = \{i \in \mathcal{I}: \,\, \exists j \in \mathcal{J}_0 \quad a_{ji} = 1\}$, and let  $\mathcal{I}_{*} = \mathcal{I} \setminus \mathcal{I}_{0}$. Denote by $\mathbf{A}_*$ the matrix obtained from $\mathbf{A}$ by removing the columns with indices in $\mathcal{I}_0$ and by removing the zero rows, if such occur afterwards, and by  $\boldsymbol \delta_*$, $\boldsymbol y_*$, and $\boldsymbol q_*$ the corresponding sub-vectors of $\boldsymbol \delta$, $\boldsymbol y$, and $\boldsymbol q$. By Theorem \ref{MLEextendTHnew}({\it{iii}}), the MLE $\tilde{\boldsymbol \delta}_{\boldsymbol y_*}$ of $\boldsymbol y_*$ under $\,\xbar{RM}_{\boldsymbol \delta_*}(\mathbf{A}_*)$ exists and is unique. Since $\mathbf{A}_*\boldsymbol q_* > \boldsymbol 0$, $\tilde{\boldsymbol \delta}_{\boldsymbol y_ *}$ can be computed using G-IPF, see Theorem \ref{ThGammaAll}, and the following holds:

\begin{theorem}\label{GIPFzeros}
The MLE of $\boldsymbol y$ under $\,\xbar{RM}_{\boldsymbol \delta}(\mathbf{A})$ is equal to $\tilde{\boldsymbol \delta}_{\boldsymbol y} = (\tilde{\boldsymbol \delta}_{\boldsymbol y_*}, \boldsymbol 0_{\mathcal{I}_0})$.
\end{theorem}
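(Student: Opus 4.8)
The plan is to pin down $\tilde{\boldsymbol\delta}_{\boldsymbol y}$ through the characterization in Theorem \ref{MLEextendTHnew}(\textit{iii}): the extended MLE is the \emph{unique} element of $\mathcal{X}_{\mathbf A}$ satisfying the moment equation (\ref{A}), together with $\boldsymbol 1'\boldsymbol\delta=1$ in the probability case. Accordingly, I would set $\boldsymbol\eta=(\tilde{\boldsymbol\delta}_{\boldsymbol y_*},\boldsymbol 0_{\mathcal I_0})$ and verify that $\boldsymbol\eta$ meets all three requirements; uniqueness then forces $\boldsymbol\eta=\tilde{\boldsymbol\delta}_{\boldsymbol y}$, which is the assertion of the theorem.

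First I would record the structural consequences of the construction of $\mathcal J_0$, $\mathcal I_0$ and $\mathcal I_*$. For each $j\in\mathcal J_0$ the equality $A_j\boldsymbol q=0$ together with $\boldsymbol q\ge\boldsymbol 0$ forces $q_i=0$ whenever $a_{ji}=1$, so $supp(\boldsymbol q)\subseteq\mathcal I_*$; consequently $A_j\boldsymbol q=(\mathbf A_*\boldsymbol q_*)_j$ for $j\notin\mathcal J_0$ and $A_j\boldsymbol q=0$ for $j\in\mathcal J_0$. Moreover every row of $\mathcal J_0$ is supported inside $\mathcal I_0$, so these are exactly the rows deleted in forming $\mathbf A_*$. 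The reduced estimate $\tilde{\boldsymbol\delta}_{\boldsymbol y_*}$ exists and is unique in $\mathcal X_{\mathbf A_*}$ (as noted before the theorem), with $\mathbf A_*\tilde{\boldsymbol\delta}_{\boldsymbol y_*}=\gamma_*\mathbf A_*\boldsymbol q_*$ for some $\gamma_*>0$, and $\boldsymbol 1'\tilde{\boldsymbol\delta}_{\boldsymbol y_*}=1$ in the probability case. Since $\boldsymbol\eta$ vanishes on $\mathcal I_0$, for $j\notin\mathcal J_0$ one gets $(\mathbf A\boldsymbol\eta)_j=(\mathbf A_*\tilde{\boldsymbol\delta}_{\boldsymbol y_*})_j=\gamma_*A_j\boldsymbol q$, while for $j\in\mathcal J_0$ both sides of (\ref{A}) vanish because row $j$ misses $\mathcal I_*$; thus (\ref{A}) holds with $\gamma=\gamma_*>0$. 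Normalization carries over verbatim, since $supp(\boldsymbol q)\subseteq\mathcal I_*$ gives $\boldsymbol 1'\boldsymbol\eta=\boldsymbol 1'\tilde{\boldsymbol\delta}_{\boldsymbol y_*}=1$.

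The crux is to show $\boldsymbol\eta\in\mathcal X_{\mathbf A}$, i.e. $\boldsymbol\eta^{\boldsymbol d^+}=\boldsymbol\eta^{\boldsymbol d^-}$ for every $\boldsymbol d\in Ker(\mathbf A)$, and this is the step I expect to require the most care, as it is where the variety structure of the extended model genuinely enters. The key device is the nonnegative functional $\boldsymbol c=\sum_{j\in\mathcal J_0}\boldsymbol e_j$, whose weights $w_i=\boldsymbol c'\boldsymbol a_i$ vanish on $\mathcal I_*$ and are strictly positive on $\mathcal I_0$; this is precisely the separating functional exhibiting $\mathcal I_*$ as a facial set of $\mathbf A$. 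For $\boldsymbol d\in Ker(\mathbf A)$ one has $\sum_i w_i d_i=\boldsymbol c'\mathbf A\boldsymbol d=0$, and since the $w_i$ are supported on $\mathcal I_0$ with positive entries there, $\boldsymbol d^+$ meets $\mathcal I_0$ if and only if $\boldsymbol d^-$ does. Hence, if $\boldsymbol d$ is nonzero somewhere on $\mathcal I_0$, then both monomials $\boldsymbol\eta^{\boldsymbol d^+}$ and $\boldsymbol\eta^{\boldsymbol d^-}$ contain a zero factor coming from $\mathcal I_0$ and equal $0$; and if $\boldsymbol d$ vanishes on $\mathcal I_0$, its restriction to $\mathcal I_*$ lies in $Ker(\mathbf A_*)$, so the required equality is inherited from $\tilde{\boldsymbol\delta}_{\boldsymbol y_*}\in\mathcal X_{\mathbf A_*}$ (the $0^0=1$ convention handling any zeros of $\tilde{\boldsymbol\delta}_{\boldsymbol y_*}$ inside $\mathcal I_*$). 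Alternatively, once $\mathcal I_*$ is identified as facial through $\boldsymbol c$, this membership can be read off directly from the facial-set properties collected in Lemma \ref{LemmaFacial}.

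With (\ref{A}), membership in $\mathcal X_{\mathbf A}$, and normalization established, Theorem \ref{MLEextendTHnew}(\textit{iii}) identifies $\boldsymbol\eta$ as the unique extended MLE, yielding $\tilde{\boldsymbol\delta}_{\boldsymbol y}=(\tilde{\boldsymbol\delta}_{\boldsymbol y_*},\boldsymbol 0_{\mathcal I_0})$. Finally, because $\mathbf A_*\boldsymbol q_*>\boldsymbol 0$, the reduced estimate $\tilde{\boldsymbol\delta}_{\boldsymbol y_*}$ is computable by G-IPF via Theorem \ref{ThGammaAll}, so the displayed identity also renders the extended MLE effectively computable even when $\mathbf A\boldsymbol q$ has zero coordinates.
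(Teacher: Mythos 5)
Your proposal is correct and follows essentially the same route as the paper's proof: you exhibit $\mathcal{I}_*$ as a facial set via the same separating functional $\boldsymbol c=(\boldsymbol 0_{\mathcal{J}\setminus\mathcal{J}_0},\boldsymbol 1_{\mathcal{J}_0})'$, verify the moment equation~(\ref{A}) and normalization for the zero-padded vector, and conclude by the uniqueness in Theorem~\ref{MLEextendTHnew}(\textit{iii}). The only difference is cosmetic: where the paper cites Lemma~\ref{FaceVariety} to get membership in $\mathcal{X}_{\mathbf{A}}$, you inline that lemma's case analysis on $\boldsymbol d\in Ker(\mathbf{A})$ (both monomials vanish when $\boldsymbol d$ touches $\mathcal{I}_0$, and the identity is inherited from $Ker(\mathbf{A}_*)$ otherwise), which is the same argument.
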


\begin{proof}

In order to show that $\tilde{\boldsymbol \delta}_{\boldsymbol y} \in \mathcal{X}_{\mathbf{A}}$, it will first be verified that  $\mathcal{I}_*$ is a facial set of $\mathbf{A}$. Let $\boldsymbol a_i$ be the $i$-th column of $\mathbf{A}$, then, with  $\boldsymbol c = (\boldsymbol 0_{\mathcal{J}\setminus \mathcal{J}_0}, \boldsymbol 1_{\mathcal{J}_0})'$, $\boldsymbol c ' \boldsymbol a_{i} = 0$ for any $i \in \mathcal{I}_*$.  If $i \notin \mathcal{I}_*$, then $a_{ji} = 1$ for some $j \in \mathcal{J}_0$, and thus  $\boldsymbol c ' \boldsymbol a_{i} > 0$. Therefore, $\mathcal{I}_{*}$ is a facial set of $\mathbf{A}$. Then, by Lemma \ref{FaceVariety}, $\tilde{\boldsymbol \delta}_{\boldsymbol y} \in \mathcal{X}_{\mathbf{A}}$.

Next, in the case of probabilities, the normalization condition $\boldsymbol 1'_{\mathcal{I}_*}\tilde{\boldsymbol \delta}_{\boldsymbol y_*} = 1$ implies that   $\boldsymbol 1'\tilde{\boldsymbol \delta}_{\boldsymbol y}  = 1$. Further, $\mathbf{A}_*\tilde{\boldsymbol \delta}_{\boldsymbol y_*} = \gamma \mathbf{A}_* \boldsymbol q_{*}$ implies that $\mathbf{A}\tilde{\boldsymbol \delta}_{\boldsymbol y} = \gamma \mathbf{A} \boldsymbol q$. 

Finally, by Theorem \ref{MLEextendTHnew}({\it{iii}}), $\tilde{\boldsymbol \delta}_{\boldsymbol y}$ is the MLE of $\boldsymbol y$ under $\,\xbar{RM}_{\boldsymbol \delta}(\mathbf{A})$. 
\end{proof}

\vspace{1mm}

\section{Conclusion}

Some research areas deal with populations of a complex structure to which inference based on the standard log-linear approach does not apply, but the relational model framework can be used. The relational models are more flexible as they allow effects associated with arbitrary subsets of cells, can be used for incomplete tables, and do not require the presence of an overall effect.  Similarly to the log-linear case, data with zero counts may not possess an MLE under a relational model. A necessary and sufficient condition for the existence of the MLE was obtained in Section  \ref{RMmlePositive}. When this condition does not hold, an MLE may exist in the extended sense, that is, in the closure of the relational model.  Different but equivalent ways of defining such a closure, and a necessary and sufficient condition for the existence of the extended MLE in it were presented in Section \ref{SecClosure}. A condition under which a distribution in the closure factorizes according to the model matrix was also given. These results were obtained using concepts and methods of algebraic statistics. Just like in the case of relational models, the cases of multinomial and Poisson sampling are not equivalent. It was shown in Section \ref{MLEsection}, that the generalized relative proportional fitting procedure originally suggested for relational models also works when the data contain zeros and the MLE is sought for in the closure of a relational model.

\appendix

\section{Appendix}

\subsection{Properties of facial sets}

\begin{lemma} \label{LemmaFacial}
Let $\mathbf{A}$ be the model matrix of a relational model, and let $F$ be a facial set of $\mathbf{A}$. Then:
\begin{enumerate}[(i)]
\item \label{a} There exists a $\boldsymbol c \in \mathbb{R}^J$, such that $\boldsymbol c'\boldsymbol a_i = 0$ for any $i \in F$ and $\boldsymbol c'\boldsymbol a_i  > 0$ for any $i \notin F$.  
\item \label{aa} For any $\boldsymbol d \in Ker(\mathbf{A})$, either both $supp(\boldsymbol d^+) \subseteq F$ and $supp(\boldsymbol d^-) \subseteq F$ or both $supp(\boldsymbol d^+) \nsubseteq F$ and $supp(\boldsymbol d^-) \nsubseteq F$.
\item \label{aaa} For any $\boldsymbol \delta \in \mathcal{X}_{\mathbf A}$,  either $supp(\boldsymbol \delta) = \mathcal{I}$ or $supp(\boldsymbol \delta)$ is a facial set of $\mathbf{A}$.
\item \label{aaaa} If $F$ is a facial set of $\mathbf{A}$, there exists a $\boldsymbol \delta \in \mathcal{X}_{\mathbf{A}}$, such that $supp(\boldsymbol \delta) = F$.
\end{enumerate}
\end{lemma}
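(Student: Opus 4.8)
The plan is to prove the four parts in the order (\ref{a}), (\ref{aa}), (\ref{aaaa}), (\ref{aaa}), since the last is the deepest and is most cleanly handled once the supporting functional of (\ref{a}) and the kernel dichotomy of (\ref{aa}) are in hand. For part (\ref{a}) I would invoke the standard fact that every proper face of a finitely generated (hence polyhedral) cone is exposed: the face $G$ associated with the facial set $F$ can be written as $G = C_{\mathbf{A}} \cap H$ for a supporting hyperplane $H = \{\boldsymbol x : \boldsymbol c'\boldsymbol x = 0\}$ with $\boldsymbol c'\boldsymbol x \geq 0$ on $C_{\mathbf{A}}$. Since a column satisfies $\boldsymbol a_i \in G$ exactly when $\boldsymbol c'\boldsymbol a_i = 0$, and $F$ is by definition the index set of the columns lying on $G$, the functional $\boldsymbol c$ has $\boldsymbol c'\boldsymbol a_i = 0$ for $i \in F$ and $\boldsymbol c'\boldsymbol a_i > 0$ for $i \notin F$, which is the claim.

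For part (\ref{aa}) I would take $\boldsymbol c$ from part (\ref{a}) and use $\mathbf{A}\boldsymbol d = \boldsymbol 0$, i.e. $\mathbf{A}\boldsymbol d^+ = \mathbf{A}\boldsymbol d^-$. Multiplying on the left by $\boldsymbol c'$ and noting that the coefficients $\boldsymbol c'\boldsymbol a_i$ vanish on $F$ yields $\sum_{i \notin F}(\boldsymbol c'\boldsymbol a_i)d_i^+ = \sum_{i \notin F}(\boldsymbol c'\boldsymbol a_i)d_i^-$; since these coefficients are strictly positive and each $d_i^{\pm} \geq 0$, the left sum vanishes iff $supp(\boldsymbol d^+) \subseteq F$ and the right iff $supp(\boldsymbol d^-) \subseteq F$, so one vanishes iff the other does, giving the dichotomy. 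Part (\ref{aaaa}) then follows immediately: for a facial set $F$ I would exhibit the $0$--$1$ indicator vector $\boldsymbol \delta = (\boldsymbol 1_F, \boldsymbol 0_{\mathcal{I} \setminus F})$, whose support is $F$ by construction. For any $\boldsymbol d \in Ker(\mathbf{A})$, part (\ref{aa}) splits into two cases: if both $supp(\boldsymbol d^+)$ and $supp(\boldsymbol d^-)$ lie in $F$, every factor of $\boldsymbol\delta^{\boldsymbol d^+}$ and $\boldsymbol\delta^{\boldsymbol d^-}$ equals $1$, so both monomials equal $1$; if neither does, each monomial has a factor $\delta_i = 0$, so both equal $0$. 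Hence $\boldsymbol\delta^{\boldsymbol d^+} = \boldsymbol\delta^{\boldsymbol d^-}$ and $\boldsymbol \delta \in \mathcal{X}_{\mathbf{A}}$. Using the indicator rather than a normalized probability vector is deliberate here, because when the overall effect is absent the defining binomials need not be homogeneous, so rescaling would in general destroy membership.

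The main obstacle is part (\ref{aaa}), that the support of an arbitrary variety point is facial. I would first record the binomial reading of membership: for $\boldsymbol \delta \in \mathcal{X}_{\mathbf{A}}$ with $S := supp(\boldsymbol \delta)$, the equality $\boldsymbol\delta^{\boldsymbol d^+} = \boldsymbol\delta^{\boldsymbol d^-}$ forces, for every $\boldsymbol d \in Ker(\mathbf{A})$, that $supp(\boldsymbol d^+) \subseteq S$ iff $supp(\boldsymbol d^-) \subseteq S$, since one monomial is positive exactly when the other is. Assuming $S \subsetneq \mathcal{I}$, I would set $\boldsymbol t = \mathbf{A}\boldsymbol \delta = \sum_{i \in S}\delta_i \boldsymbol a_i$, a strictly positive combination of the columns in $S$, and let $G$ be the smallest face of $C_{\mathbf{A}}$ containing $\boldsymbol t$, with facial index set $F$; then $\boldsymbol t \in relint(G)$ and the extreme property of faces gives $S \subseteq F$. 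The crux is the reverse inclusion. Supposing $i_0 \in F \setminus S$, the fact that $\boldsymbol a_{i_0} \in G$ and $\boldsymbol t \in relint(G)$ lets $\boldsymbol t - \varepsilon \boldsymbol a_{i_0}$ remain in $G \subseteq C_{\mathbf{A}}$ for small $\varepsilon > 0$, so $\boldsymbol t - \varepsilon \boldsymbol a_{i_0} = \mathbf{A}\boldsymbol \mu$ for some $\boldsymbol \mu \geq \boldsymbol 0$. Then $\boldsymbol d := \boldsymbol \delta - \varepsilon \boldsymbol e_{i_0} - \boldsymbol \mu \in Ker(\mathbf{A})$ has $d_{i_0} = -\varepsilon - \mu_{i_0} < 0$ with $i_0 \notin S$, so $supp(\boldsymbol d^-) \nsubseteq S$; the membership reading then forces some $i_1 \notin S$ with $d_{i_1} > 0$, whereas for $i_1 \neq i_0$ one has $d_{i_1} = \delta_{i_1} - \mu_{i_1} = -\mu_{i_1} \leq 0$, a contradiction. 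Thus $F = S$, and since $S \subsetneq \mathcal{I}$ the face $G$ is proper, so $S$ is facial. I expect the delicate points to be the careful bookkeeping of which coordinates of $\boldsymbol d$ can be positive and the justification that a small perturbation of a relative-interior point along an in-face direction stays in the cone; everything else is routine.
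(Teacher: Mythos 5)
Your proof is correct, but you should know that the paper itself does not actually prove this lemma: its entire ``proof'' is a citation. The authors state that the four claims were proved by Geiger, Meek and Sturmfels (2006) and by Rauh et al.\ for models of type (\ref{genll}) containing the overall effect, and merely remark that those proofs nowhere rely on that hypothesis and therefore carry over. Your proposal replaces this appeal to the literature with a self-contained argument, and the argument holds up under checking: part (i) follows from exposedness of faces of polyhedral cones, once one reads ``facial set'' --- as you correctly do, and as the paper's own equivalent characterization via $\boldsymbol c$ forces --- as the set of \emph{all} indices whose columns lie on the face, not merely an affinely independent spanning subset; part (ii) is the standard evaluation of the functional $\boldsymbol c$ on $\mathbf{A}\boldsymbol d^{+}=\mathbf{A}\boldsymbol d^{-}$; part (iv) via the $0$--$1$ indicator vector is clean, and your remark that one must not normalize it pinpoints exactly where the absence of the overall effect (inhomogeneity of the defining binomials) would bite; and in part (iii) the perturbation step $\boldsymbol t-\varepsilon\boldsymbol a_{i_0}\in G$ is legitimate because $\boldsymbol t$ lies in the relative interior of its smallest containing face $G$, which is itself a cone, and the resulting kernel vector $\boldsymbol d=\boldsymbol\delta-\varepsilon\boldsymbol e_{i_0}-\boldsymbol\mu$ genuinely violates the dichotomy imposed by membership in $\mathcal{X}_{\mathbf{A}}$, since every coordinate of $\boldsymbol d$ outside $supp(\boldsymbol\delta)$ other than $i_0$ equals $-\mu_{i}\leq 0$, while $F=supp(\boldsymbol\delta)\subsetneq\mathcal{I}$ guarantees $G$ is proper. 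The comparison, then: the paper's route buys brevity and deference to the original sources, at the cost of asking the reader to verify that the cited proofs never use $\boldsymbol 1'\in\mathcal{R}(\mathbf{A})$; your route makes that verification explicit --- no step of your argument invokes the overall effect --- which is precisely the assertion the paper leaves unchecked.
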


The statements of the lemma were proved by \cite{GeigerMeekSturm2006} and \cite*{RauhMatroid} for models of type (\ref{genll}) when the overall effect is present. Their proofs do not rely on the latter characteristic and thus apply here.

The next lemma shows that the condition of existence of the MLE given in Theorem \ref{MLEnoOvEff} can also be formulated in terms of facial sets.

\begin{lemma}\label{LemmaFacialTau}
There exists a $\boldsymbol z > \boldsymbol 0$, such that $\mathbf{A}\boldsymbol z = \mathbf{A} \boldsymbol q$, if and only if
 $supp(\boldsymbol q)$ is not contained in any facial set of $\mathbf{A}$.
\end{lemma}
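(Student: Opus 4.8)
The plan is to reformulate the condition on $\boldsymbol z$ in terms of the cone $C_{\mathbf{A}}$ and then exploit the supporting-functional characterization of facial sets in Lemma~\ref{LemmaFacial}(\ref{a}). First I would observe that, by the description of $relint(C_{\mathbf{A}})$ given just before the lemma, the existence of a positive $\boldsymbol z$ with $\mathbf{A}\boldsymbol z = \mathbf{A}\boldsymbol q$ is precisely the statement $\mathbf{A}\boldsymbol q \in relint(C_{\mathbf{A}})$. Since $\mathbf{A}$ has no zero column and $\boldsymbol q \geq \boldsymbol 0$ has nonempty support, $\mathbf{A}\boldsymbol q$ is a nonzero point of $C_{\mathbf{A}}$, so the negation ``no positive $\boldsymbol z$ exists'' amounts to $\mathbf{A}\boldsymbol q$ lying on a proper face of the polyhedral cone $C_{\mathbf{A}}$. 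Thus it suffices to prove the equivalent contrapositive statement: $\mathbf{A}\boldsymbol q$ lies on a proper face of $C_{\mathbf{A}}$ if and only if $supp(\boldsymbol q)$ is contained in some facial set of $\mathbf{A}$.

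For the forward implication I would assume that $\mathbf{A}\boldsymbol q$ lies on a proper face. By the face description recorded in (\ref{linComb}) there is a facial set $F$ with $\mathbf{A}\boldsymbol q = s_1\boldsymbol a_{i_1} + \dots + s_f\boldsymbol a_{i_f}$, all $i_k \in F$. Taking the functional $\boldsymbol c$ supplied by Lemma~\ref{LemmaFacial}(\ref{a}), which satisfies $\boldsymbol c'\boldsymbol a_i = 0$ for $i \in F$ and $\boldsymbol c'\boldsymbol a_i > 0$ for $i \notin F$, this representation gives $\boldsymbol c'\mathbf{A}\boldsymbol q = \sum_k s_k(\boldsymbol c'\boldsymbol a_{i_k}) = 0$. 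On the other hand $\boldsymbol c'\mathbf{A}\boldsymbol q = \sum_{i \in supp(\boldsymbol q)} q_i(\boldsymbol c'\boldsymbol a_i)$ is a sum of non-negative terms, so each $q_i(\boldsymbol c'\boldsymbol a_i)$ vanishes; since $q_i > 0$ on the support, every such $\boldsymbol c'\boldsymbol a_i$ is zero, forcing $i \in F$, and hence $supp(\boldsymbol q) \subseteq F$.

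For the converse I would assume $supp(\boldsymbol q) \subseteq F$ for a facial set $F$ and again take the associated $\boldsymbol c$ from Lemma~\ref{LemmaFacial}(\ref{a}). Then $\boldsymbol c'\mathbf{A}\boldsymbol q = \sum_{i \in supp(\boldsymbol q)} q_i(\boldsymbol c'\boldsymbol a_i) = 0$, because all those indices lie in $F$. Were there a positive $\boldsymbol z$ with $\mathbf{A}\boldsymbol z = \mathbf{A}\boldsymbol q$, we would also have $\boldsymbol c'\mathbf{A}\boldsymbol z = 0$; but $F$ is a proper face, so some $i_0 \notin F$ exists, and $\boldsymbol c'\mathbf{A}\boldsymbol z = \sum_i z_i(\boldsymbol c'\boldsymbol a_i) \geq z_{i_0}(\boldsymbol c'\boldsymbol a_{i_0}) > 0$, a contradiction. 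Hence $\mathbf{A}\boldsymbol q$ lies on a proper face, completing the equivalence.

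The step I expect to be the main obstacle is the forward direction, where one must ensure that a point of $C_{\mathbf{A}}$ failing to lie in the relative interior actually sits on a genuinely \emph{facial} proper face carrying the separating functional $\boldsymbol c$, rather than merely on the topological boundary. I would settle this by invoking the standard face lattice of a polyhedral cone together with the representation (\ref{linComb}) and the equivalence in Lemma~\ref{LemmaFacial}(\ref{a}), which jointly guarantee the required $F$ and $\boldsymbol c$; the remaining sign bookkeeping is then routine.
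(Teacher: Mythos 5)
Your proposal is correct, and one half of it is genuinely different from the paper's argument. The direction you call the forward implication (a point on a proper face forces $supp(\boldsymbol q)$ into a facial set) is essentially the paper's own proof of that direction: the paper also passes from the failure of a positive solution to a representation of $\mathbf{A}\boldsymbol q$ as in (\ref{linComb}) and then hits both sides with the functional $\boldsymbol c$ of Lemma \ref{LemmaFacial}(\ref{a}); your bookkeeping with a vanishing sum of non-negative terms is just a reorganization of the paper's conclusion that $q_i = 0$ for $i \notin F$. Where you diverge is the other direction. The paper proves ``positive $\boldsymbol z$ exists $\Rightarrow$ $supp(\boldsymbol q)$ lies in no facial set'' by writing $\boldsymbol z = \boldsymbol q + \boldsymbol d$ with $\boldsymbol d \in Ker(\mathbf{A})$ and running a case analysis on whether $supp(\boldsymbol d^+)$ and $supp(\boldsymbol d^-)$ lie inside a given facial set, which requires Lemma \ref{LemmaFacial}(\ref{aa}). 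You instead prove the contrapositive with the same separating functional used in the other direction: if $supp(\boldsymbol q) \subseteq F$ then $\boldsymbol c'\mathbf{A}\boldsymbol q = 0$, whereas any strictly positive $\boldsymbol z$ would give $\boldsymbol c'\mathbf{A}\boldsymbol z \geq z_{i_0}\,\boldsymbol c'\boldsymbol a_{i_0} > 0$ for some $i_0 \notin F$, a contradiction. This buys uniformity and economy: your whole proof rests only on Lemma \ref{LemmaFacial}(\ref{a}) and the cone facts recorded before the lemma, and it never touches kernel vectors or part (\ref{aa}); the paper's route, by contrast, makes the role of $Ker(\mathbf{A})$ explicit, in the same style as its later arguments (e.g., Lemma \ref{FaceVariety}). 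One point you implicitly rely on, which is worth stating, is that a facial set is a proper subset of $\mathcal{I}$ (so that the index $i_0 \notin F$ exists); this does follow from the paper's definition of facial sets via proper faces of $C_{\mathbf{A}}$, since a face containing every column would be all of $C_{\mathbf{A}}$.
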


\begin{proof}

Suppose there exists a $\boldsymbol z > \boldsymbol 0$, such that $\mathbf{A}\boldsymbol z = \mathbf{A} \boldsymbol q$, and thus  $\boldsymbol d = \boldsymbol z - \boldsymbol q \in Ker(\mathbf{A})$ and $\boldsymbol q + \boldsymbol d > \boldsymbol 0$. 

Let $F$ be a facial set of $\mathbf{A}$. If both $\boldsymbol d^+ \subseteq F$ and $\boldsymbol d^- \subseteq F$, then  $d_i = 0$ for all $i \notin F$. Because $\boldsymbol q + \boldsymbol d  > \boldsymbol 0$, $q_i + d_i = q_i > 0$ for all $i \notin F$. Therefore, $supp(\boldsymbol q)$ is not contained in $F$.  Otherwise, see Lemma \ref{LemmaFacial}, both $\boldsymbol d^+ \nsubseteq F$ and $\boldsymbol d^- \nsubseteq F$, and there exists an $i \notin F$ such that $d_i < 0$. If $q_i$ was zero, then $q_i + d_i$ would be negative, which contradicts the initial assumption $\boldsymbol q + \boldsymbol d > \boldsymbol 0$. Therefore, $q_i$ has to be positive, which implies that $supp(\boldsymbol q)$ is not contained in $F$.

To prove the converse, assume that $supp(\boldsymbol q)$ is not contained in any facial set $F$. Suppose the equation $\mathbf{A} \boldsymbol q = \mathbf{A}\boldsymbol z$ has no (strictly) positive solution in $\boldsymbol z$, and, therefore, $\mathbf{A}\boldsymbol q \notin relint(C_{\mathbf{A}})$. A non-negative solution always exists, and thus  $\mathbf{A}\boldsymbol q$ belongs to a face of $C_{\mathbf{A}}$.  Then (\ref{linComb}) holds for $\boldsymbol t =\mathbf{A} \boldsymbol q$ for some facial set $F$; without loss of generality, $F=\{1, \dots,f\}$:
\begin{equation*}
\mathbf{A}\boldsymbol q = s_1\boldsymbol a_1 + \dots +s_f\boldsymbol a_f.
\end{equation*}
Hence,
\begin{equation}\label{eqFn}
(q_1- s_1)\boldsymbol a_1 + \dots + (q_f-s_f)\boldsymbol a_f + q_{f+1}\boldsymbol a_{f+1} + \dots +q_{|\mathcal{I}|}\boldsymbol a_{|\mathcal{I}|} = \boldsymbol 0.
\end{equation}
Multiplying both sides of (\ref{eqFn}) by a vector $\boldsymbol c$, such that $\boldsymbol c'\boldsymbol a_i = 0$ for $i \in F$ and $\boldsymbol c'\boldsymbol a_i > 0$ for $i \notin F$, leads to:
$$q_{f+1} = 0, \dots, q_{|\mathcal{I}|} = 0,$$
which means that $supp(\boldsymbol q) \subset F$. This contradicts the initial assumption that $supp(\boldsymbol q)$ is not contained in any facial set. 
\end{proof}

The following lemma is used in the proofs of Theorems \ref{MLEextendTHnew} and \ref{GIPFzeros}. 

\begin{lemma}\label{FaceVariety}
If $F$ is a facial set of $\mathbf{A}$, then, for any $\boldsymbol \delta_F \in \mathcal{X}_{\mathbf{A}_F}$, $\boldsymbol \delta = (\boldsymbol \delta_F, \boldsymbol 0_{\mathcal{I}\setminus F} )\in \mathcal{X}_{\mathbf{A}}$.
\end{lemma}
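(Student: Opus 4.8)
The plan is to verify the defining binomial relations of $\mathcal{X}_{\mathbf{A}}$ directly for the extended vector $\boldsymbol \delta = (\boldsymbol \delta_F, \boldsymbol 0_{\mathcal{I}\setminus F})$. Fix an arbitrary $\boldsymbol d \in Ker(\mathbf{A})$; the goal is to show $\boldsymbol \delta^{\boldsymbol d^+} = \boldsymbol \delta^{\boldsymbol d^-}$, since this is all that membership in $\mathcal{X}_{\mathbf{A}}$ requires. The key device is the dichotomy in Lemma \ref{LemmaFacial}(\ref{aa}): because $F$ is facial, either both $supp(\boldsymbol d^+) \subseteq F$ and $supp(\boldsymbol d^-) \subseteq F$, or neither containment holds. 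I would split the argument along exactly these two cases.

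In the first case $supp(\boldsymbol d) \subseteq F$, so that $\mathbf{A}\boldsymbol d = \sum_{i \in F} d_i \boldsymbol a_i = \mathbf{A}_F \boldsymbol d_F = \boldsymbol 0$, where $\boldsymbol d_F$ is the restriction of $\boldsymbol d$ to the coordinates in $F$ (deleting any zero rows of $\mathbf{A}$ when forming $\mathbf{A}_F$ does not alter the kernel). Hence $\boldsymbol d_F \in Ker(\mathbf{A}_F)$, and since $\boldsymbol \delta_F \in \mathcal{X}_{\mathbf{A}_F}$ by hypothesis, the relation $\boldsymbol \delta_F^{\boldsymbol d_F^+} = \boldsymbol \delta_F^{\boldsymbol d_F^-}$ holds. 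As the exponents $d_i^{\pm}$ vanish outside $F$, the monomials $\boldsymbol \delta^{\boldsymbol d^{\pm}}$ involve only coordinates in $F$, where $\boldsymbol \delta$ agrees with $\boldsymbol \delta_F$; thus $\boldsymbol \delta^{\boldsymbol d^+} = \boldsymbol \delta_F^{\boldsymbol d_F^+} = \boldsymbol \delta_F^{\boldsymbol d_F^-} = \boldsymbol \delta^{\boldsymbol d^-}$, as needed.

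In the second case there is an index $i_+ \in supp(\boldsymbol d^+) \setminus F$ and an index $i_- \in supp(\boldsymbol d^-) \setminus F$. By construction $\delta_{i_+} = \delta_{i_-} = 0$, while the corresponding exponents $d_{i_+}^+$ and $d_{i_-}^-$ are strictly positive, so each of the products $\boldsymbol \delta^{\boldsymbol d^+}$ and $\boldsymbol \delta^{\boldsymbol d^-}$ contains a factor equal to zero. Consequently $\boldsymbol \delta^{\boldsymbol d^+} = 0 = \boldsymbol \delta^{\boldsymbol d^-}$; the convention $0 \cdot \log 0 = 0$ (equivalently $0^0 = 1$) is applied only at coordinates with zero exponent and so causes no difficulty. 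Combining the two cases, the binomial relation holds for every $\boldsymbol d \in Ker(\mathbf{A})$, which gives $\boldsymbol \delta \in \mathcal{X}_{\mathbf{A}}$.

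I expect the only real content to be invoking the dichotomy of Lemma \ref{LemmaFacial}(\ref{aa}) correctly. Without it, a generic $\boldsymbol d \in Ker(\mathbf{A})$ could have $supp(\boldsymbol d^+) \subseteq F$ but $supp(\boldsymbol d^-) \nsubseteq F$, so that exactly one side of the binomial would vanish and the relation would fail; the facial property of $F$ is precisely what excludes this mixed situation. Once the case split is in hand, the remaining verifications are routine bookkeeping with restrictions to $F$ and with the vanishing monomials.
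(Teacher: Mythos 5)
Your proof is correct and follows essentially the same route as the paper's: both invoke the dichotomy of Lemma \ref{LemmaFacial}(\ref{aa}) for an arbitrary $\boldsymbol d \in Ker(\mathbf{A})$, handle the contained case by restricting to a kernel element of $\mathbf{A}_F$ and using $\boldsymbol \delta_F \in \mathcal{X}_{\mathbf{A}_F}$, and handle the non-contained case by observing that both monomials acquire a zero factor. Your additional remark verifying $\boldsymbol d_F \in Ker(\mathbf{A}_F)$ explicitly is a small but welcome piece of bookkeeping that the paper leaves implicit.
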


\begin{proof} 
Take an arbitrary $\boldsymbol d \in Ker(\mathbf{A})$. As $F$ is a facial set of $\mathbf{A}$, by Lemma \ref{LemmaFacial}(\ref{aa}),  exactly one of the following holds:
$$supp(\boldsymbol d^+) \subseteq F \,\, \mbox{ and }\, supp(\boldsymbol d^-) \subseteq F, \quad \mbox{or} \quad supp(\boldsymbol d^+) \nsubseteq F \,\, \mbox{ and }\, supp(\boldsymbol d^-) \nsubseteq F.$$
In the first case, there exists a $\boldsymbol d_F \in Ker(\mathbf{A}_F)$, such that $\boldsymbol d = (\boldsymbol d_{F}, \boldsymbol 0_{\mathcal{I} \setminus F})$.   Since $\boldsymbol \delta_F \in \mathcal{X}_{\mathbf{A}_F}$, $(\boldsymbol \delta_F)^{\boldsymbol d_F^+}  = (\boldsymbol \delta_F)^{\boldsymbol d_F^-}$, and, therefore, 
$$(\boldsymbol \delta)^{\boldsymbol d^+} = (\boldsymbol \delta_F)^{\boldsymbol d_F^+}\cdot (\boldsymbol 0_{\mathcal{I}\setminus F})^{\boldsymbol 0_{\mathcal{I}\setminus F}} = (\boldsymbol \delta_F)^{\boldsymbol d_F^-} \cdot (\boldsymbol 0_{\mathcal{I}\setminus F})^{\boldsymbol 0_{\mathcal{I}\setminus F}} = (\boldsymbol \delta)^{\boldsymbol d^-}.$$
In the second case, there exist such $i_1, i_2 \notin F$ that $d_{i_1} > 0$ and $d_{i_2} < 0$, and thus,
$$(\boldsymbol \delta)^{\boldsymbol d^+} = (\boldsymbol \delta_F)^{\boldsymbol d_F^+}\cdot 0 =  (\boldsymbol \delta_F)^{\boldsymbol d_F^-}\cdot 0  = (\boldsymbol \delta)^{\boldsymbol d^-}.$$
As $(\boldsymbol \delta)^{\boldsymbol d^+} = (\boldsymbol \delta)^{\boldsymbol d^-}$ for any $\boldsymbol d \in Ker(\mathbf{A})$, $\boldsymbol \delta \in \mathcal{X}_{\mathbf{A}}$.
\end{proof}

\subsection{Proof of Theorem \ref{THclosure}}

The proof extends the arguments given by \cite{GeigerMeekSturm2006} and \cite{RauhMatroid}. It will be shown first that for any distribution in $\,\xbar{RM}_{\delta}(\mathbf{A})$ there exists a sequence of distributions in $RM_{\boldsymbol \delta}(\mathbf{A})$ that converges to it pointwise. 

Let $\boldsymbol \delta^* \in \,\xbar{RM}_{\delta}(\mathbf{A})$. By Lemma \ref{LemmaFacial}, as $\boldsymbol \delta^* \in \mathcal{X}_{\mathbf{A}}$, $F = supp(\boldsymbol \delta^*)$ is either $\mathcal{I}$ or a facial set of $\mathbf{A}$. If $F =\mathcal{I}$, then $\boldsymbol \delta^* > \boldsymbol 0$, and the statement holds with $\boldsymbol \delta^{(n)} \equiv \boldsymbol \delta^*$.  Assume that $F \subsetneq \mathcal{I}$. For simplicity of exposition, let $F = \{1,\dots,f\}$, and then $\boldsymbol \delta^* = (\delta_1^*, \dots, \delta_f^*, 0, \dots, 0)$.

First, find $\eta_1, \dots, \eta_J > 0$ that satisfy:
$$\prod_{j=1}^J \eta_j^{a_{ji}} = \delta_i^* \quad \mbox{ for } i \in F.$$
The existence of such $\theta$'s can be proved using the same argument as \citet[p.28]{GeigerMeekSturm2006} gave for the case of extended log-linear models. By Lemma \ref{LemmaFacial}, there exists a $\boldsymbol c = (c_1, \dots, c_J)' \in \mathbb{R}^J$, such that $\boldsymbol c'\boldsymbol a_i = 0$ for all $i \in F$  and $\boldsymbol c'\boldsymbol a_i > 0$  for any $i \notin F$. 
Order the columns of $\mathbf{A}$ so that $c_1 > 0$, and then order the rows of $\mathbf{A}$ so that $a_{11} = 1$. 

If $\boldsymbol \delta \equiv  \boldsymbol \lambda$, set, for $n \in \mathbb{Z}_{>0}$,
$$\lambda_i^{(n)} = \prod_{j=1}^J(n^{-c_j}\eta_j)^{a_{ji}},   \qquad i \in \mathcal{I}.$$
The distribution $\boldsymbol \lambda^{(n)}=  (\lambda_1^{(n)}, \dots, \lambda_{|\mathcal{I}|}^{(n)})'$ is positive and satisfies  (\ref{RMexpF}) with $\theta_j = n^{-c_j}\eta_j$. Therefore, $\boldsymbol \lambda^{(n)} \in RM_{\boldsymbol \lambda}(\mathbf{A})$. Further,   \begin{equation*}
\lim_{n \to \infty} \lambda_i^{(n)} =  \lim_{n \to \infty} n^{-\boldsymbol c' \boldsymbol a_i} \prod_{j=1}^J \eta_j^{a_{ji}} =\left\{ \begin{array}{ll} \delta_i^*,  & \mbox{ if } i \in F,\\
{}  \\
0, & \mbox{ if } i \notin F,\end{array}\right.  
\end{equation*}
thus $\boldsymbol \lambda^{(n)} \to \boldsymbol \delta^*$ pointwise, as $n \to \infty$.

If $\boldsymbol \delta \equiv \boldsymbol p$, take
$$
\eta_1^{(n)} = \frac{1-\sum_{i: \, a_{1i} = 0}  \prod_{j = 2}^J ( n^{-c_j}\eta_j)^{a_{ji}} }{\sum_{i:\, a_{1i} = 1}   \prod_{j = 2}^J ( n^{-c_j}\eta_j)^{a_{ji}} },
$$
and set 
$$p_i^{(n)} = (\eta_1^{(n)})^{a_{1i}} \prod_{j = 2}^J ( n^{-c_j}\eta_j)^{a_{ji}}, \qquad i \in \mathcal{I}.$$
The choice of $\eta_1^{(n)}$ implies that $\boldsymbol 1' \boldsymbol p^{(n)} =1$. As  $\boldsymbol p^{(n)} = (p_1^{(n)}, \dots, p_{|\mathcal{I}|}^{(n)})' $ is positive and satisfies  (\ref{RMexpF}) with $\theta_1 = \eta^{(n)}_1$, $\theta_j =  n^{-c_j}\eta_j$, for $j = 2, \dots, J$,  $\boldsymbol p^{(n)} \in RM_{\boldsymbol p}(\mathbf{A})$. Next, because $\boldsymbol c'\boldsymbol a_i = 0$ if $i \in F$,
\begin{eqnarray} \label{etaL}
\lim _{n \to \infty}n^{c_1}\eta_1^{(n)} &=&\lim _{n \to \infty}\frac{n^{c_1}(1-\sum_{a_{1i} = 0, i \in F} \prod_{j = 2}^J \eta_j^{a_{ji}} - \sum_{a_{1i} = 0, i \notin F} n^{- \boldsymbol c' \boldsymbol a_i} \prod_{j = 2}^J \eta_j^{a_{ji}} )}{n^{c_1}(\sum_{a_{1i} = 1, i \in F} \prod_{j = 2}^J \eta_j^{a_{ji}} + \sum_{a_{1i} = 1, i \notin F}  n^{-\boldsymbol c' \boldsymbol a_i} \prod_{j = 2}^J \eta_j^{a_{ji}} )}  \nonumber \\
&& \nonumber \\
&& \nonumber \\
&=&  \frac{1-\sum_{i \in F: \, a_{1i} = 0}  \prod_{j = 2}^J \eta_j^{a_{ji}} }{\sum_{i \in F: \, a_{1i} = 1}  \prod_{j = 2}^J \eta_j^{a_{ji}} }= \eta_1.
\end{eqnarray}

\vspace{1mm}

\noindent Further, for $i \in \mathcal{I}$, using (\ref{etaL}),
\begin{eqnarray*}
\lim_{n \to \infty} p_i^{(n)}&=& \lim_{n \to \infty} n^{a_{1i}c_1-\boldsymbol c' \boldsymbol a_i} (\eta_1^{(n)})^{a_{1i}}  \prod_{j = 2}^J  \eta_j^{a_{ji}} = \lim_{n \to \infty} n^{-\boldsymbol c' \boldsymbol a_i} (n^{c_1}\eta_1^{(n)})^{a_{1i}}  \prod_{j = 2}^J  \eta_j^{a_{ji}} \\
&& \\
&=&
\lim_{n \to \infty} n^{-\boldsymbol c' \boldsymbol a_i} (\eta_1)^{a_{1i}}  \prod_{j = 2}^J  \eta_j^{a_{ji}} = \lim_{n \to \infty} n^{-\boldsymbol c' \boldsymbol a_i} \prod_{j = 1}^J  \eta_j^{a_{ji}}= \left\{
\begin{array}{ll}
\delta_i^* & i \in F,\\
{} & \\
0 & i \notin F.
\end{array}
\right.
\end{eqnarray*}
Hence, $\boldsymbol p^{(n)} \to \boldsymbol \delta^*$ pointwise, as $n \to \infty$.

Therefore, $ \,\xbar{RM}_{\delta}(\mathbf{A}) \subset cl(RM_{\boldsymbol \delta}(\mathbf{A}))$.

To prove the converse, choose a $\boldsymbol \delta^* \in \,cl(RM_{\boldsymbol \delta}(\mathbf{A}))$. Then, $\boldsymbol \delta^*$ is a pointwise limit of a sequence of distributions in $RM_{\boldsymbol \delta}(\mathbf{A})$, and $\boldsymbol \delta^*$ is the pointwise limit of a sequence in  $\mathcal{X}_{\mathbf{A}}$.
As $\mathcal{X}_{\mathbf{A}}$ is closed in the topology of pointwise convergence \citep[cf.][]{GeigerMeekSturm2006}, $\boldsymbol \delta^* \in \mathcal{X}_{\mathbf{A}}$. If $\boldsymbol \delta \equiv \boldsymbol p$, both  $\boldsymbol \delta^*$ and the sequence converging to it belong to the simplex $\Delta_{|\mathcal{I}|-1}$. Therefore, $\boldsymbol \delta^* \in \,\xbar{RM}_{\delta}(\mathbf{A})$, and the proof is complete.  \qed

\subsection{Proof of Theorem \ref{MLEextendTHnew}:}

\vspace{4mm}

The statement ({\it{i}}) follows from Theorem \ref{MLEnoOvEff} and Lemma \ref{LemmaFacialTau}. \qed 

\vspace{3mm}

\noindent In order to prove ({\it{ii}}), notice first that, the smallest facial set $F$ of $\mathbf{A}$ which contains $supp(\boldsymbol q)$ is uniquely defined. In this case,  $\mathbf{A}_F \boldsymbol q_F \in relint(\mathbf{C}_{\mathbf{A}_F})$, and, therefore, $supp(\boldsymbol q)$ is not contained in any facial set of $\mathbf{A}_F$. By part  ({\it{i}}) of this theorem, the MLE $\hat{\boldsymbol \delta}_{\boldsymbol y_ F}$ under $RM_{\boldsymbol \delta_F}(\mathbf{A}_F)$ exists.

Let  $\tilde{\boldsymbol \delta}_{\boldsymbol y} = (\hat{\boldsymbol \delta}_{\boldsymbol y_F}, \boldsymbol 0_{\mathcal{I} \setminus F})$. By Lemma \ref{FaceVariety}, $\tilde{\boldsymbol \delta}_{\boldsymbol y} \in \mathcal{X}_{\mathbf{A}}$. If $\boldsymbol \delta \equiv \boldsymbol p$, $\boldsymbol 1'\hat{\boldsymbol p}_{\boldsymbol y_F} = 1\,$, and thus $\tilde{\boldsymbol p}_{\boldsymbol y}$ satisfies the normalization condition $\,\boldsymbol 1'\tilde{\boldsymbol p}_{\boldsymbol y} = 1$.  It will be shown next that $\tilde{\boldsymbol \delta}_{\boldsymbol y}$ maximizes the full log-likelihood of $\boldsymbol y$.

Let $\boldsymbol \delta \equiv \boldsymbol \lambda$. The log-likelihood under the model $RM_{\boldsymbol \lambda_F}(\mathbf{A}_F)$ is equal to 
$${l}_F(\boldsymbol q_F, \boldsymbol \lambda_F) = \sum_{i \in F} q_{Fi} \mbox{log } \lambda_{Fi} - \sum_{i \in F} \lambda_{Fi},$$
and for any $\boldsymbol \lambda_F > 0$, $\,\,{l}_F(\boldsymbol q_F, \boldsymbol \lambda_F)\leq {l}_F(\boldsymbol q_F, \hat{\boldsymbol \lambda}_{\boldsymbol y_F}).$

Let ${\boldsymbol \lambda} = (\boldsymbol \lambda_F', \boldsymbol 0)'$, and let  $\boldsymbol \lambda^{(n)}$ be the sequence that was described in the proof of Theorem \ref{THclosure}. The full log-likelihood of the elements of this sequence is
\begin{eqnarray*}
{l}(\boldsymbol q, \boldsymbol \lambda^{(n)}) &=& \sum_{i\in \mathcal{I}} q_i \mbox{log }\lambda_i^{(n)}  - 
\sum_{i \in \mathcal{I}} \lambda_{i}^{(n)} = \sum_{i \in F} q_i  \mbox{log }\lambda_i^{(n)} - \sum_{i \in \mathcal{I}} \lambda_{i}^{(n)}\\
&=&  \sum_{i \in F} q_i \mbox{log } \{ n^{-\boldsymbol c\boldsymbol a_i} \prod_{j = 1}^J  \theta_j^{a_{ji}}\} - \sum_{i \in \mathcal I}  n^{-\boldsymbol c\boldsymbol a_i} \prod_{j = 1}^J  \theta_j^{a_{ji}} \\
&=&  \sum_{i \in F}q_i \mbox{log } \{\prod_{j = 1}^J  \theta_j^{a_{ji}}\} - \sum_{i \in F}  \prod_{j = 1}^J  \theta_j^{a_{ji}} -   \sum_{i \notin F} n^{-\boldsymbol c\boldsymbol a_i} \prod_{j = 1}^J  \theta_j^{a_{ji}}\\
&=& {l}_F(\boldsymbol q_F, \boldsymbol \lambda_F) -\sum_{i \notin F}  n^{-\boldsymbol c\boldsymbol a_i} \prod_{j = 1}^J  \theta_j^{a_{ji}}.
\end{eqnarray*}
Therefore, 
\begin{equation}\label{LLineqInt}
{l}(\boldsymbol q, \boldsymbol \lambda^{(n)})  \leq {l}_F  (\boldsymbol q_F, \boldsymbol \lambda_F) \leq {l}_F(\boldsymbol q_F, \hat{\boldsymbol \lambda}_{\boldsymbol y_F}).
\end{equation}

Let $\boldsymbol \delta \equiv \boldsymbol p$. The log-likelihood under the model $RM_{\boldsymbol p_F} (\mathbf{A}_F)$ is equal to
$${l}_F(\boldsymbol q_F, \boldsymbol p_F) = \sum_{i = 1}^f q_{Fi} \mbox{log } p_{Fi},$$
and for any $\boldsymbol p_F > 0$, such that $\boldsymbol 1'\boldsymbol p_F=1$,
$\,{l}_F(\boldsymbol q_F, \boldsymbol p_F)\leq {l}_F(\boldsymbol q_F, \hat{\boldsymbol p}_{\boldsymbol y_F})$.

Let ${\boldsymbol p} = (\boldsymbol p_F', \boldsymbol 0)'$, and let  $\boldsymbol p^{(n)}$ be the sequence that was described in the proof of Theorem \ref{THclosure}.
The full log-likelihood of the elements of this sequence is
\begin{eqnarray*}
{l}(\boldsymbol q, \boldsymbol p^{(n)}) &=& \sum_{i \in \mathcal{I}} q_i \mbox{log }p_i^{(n)}  = \sum_{i \in F} q_i  \mbox{log }p_i^{(n)} \\
&=&  \sum_{i\in F} q_i \mbox{log } \{ (\theta_1^{(n)})^{a_{1i}} \prod_{j = 2}^J ( n^{-c_j}\theta_j)^{a_{ji}}\} =  \sum_{i \in F} q_i \mbox{log } \{ (\theta_1^{(n)})^{a_{1i}} n^{a_{1i}c_1 - \boldsymbol c'\boldsymbol a_i} \prod_{j = 2}^J \theta_j^{a_{ji}}\} \\
&=& \sum_{i\in F:\, a_{i1}=1} q_i \mbox{log }  \theta_1^{(n)} n^{c_1} \prod_{j = 2}^J \theta_j^{a_{ji}}+
\sum_{i \in F:\, a_{i1}=0} q_i \mbox{log } \prod_{j = 2}^J \theta_j^{a_{ji}} \\
&=& \sum_{i \in F:\, a_{i1}=1} q_i \mbox{log }  \prod_{j = 1}^J \theta_j^{a_{ji}} +
\sum_{i \in F:\, a_{i1}=0} q_i \mbox{log } \prod_{j = 1}^J \theta_j^{a_{ji}} -  \sum_{i \in F:\, a_{i1}=1} q_i \mbox{log }\{\theta_1/ 
(\theta_1^{(n)} n^{c_1})\}\\
&=& {l}_F(\boldsymbol q_F, \boldsymbol p_F) - \mbox{log }\{\theta_1/ 
(\theta_1^{(n)} n^{c_1})\}\cdot \sum_{i \in F:\, a_{i1}=1} q_i.
\end{eqnarray*}

It will be shown next that $\theta_1/(\theta_1^{(n)} n^{c_1}) > 1$.

\begin{eqnarray*}
\frac{\theta_1}{\theta_1^{(n)} n^{c_1}} &=& \frac{1-\sum_{i \in F: \, a_{i1} = 0}  \prod_{j = 2}^J \theta_j^{a_{ji}} }{\sum_{i \in F: \, a_{i1} = 1} \prod_{j = 2}^J \theta_j^{a_{ji}} } \\
&\cdot& \frac{n^{c_1}(\sum_{a_{i1} = 1, i \in F}  \prod_{j = 2}^J \theta_j^{a_{ji}} + \sum_{a_{i1} = 1, i \notin F} n^{-\boldsymbol c'\boldsymbol a_i} \prod_{j = 2}^J \theta_j^{a_{ji}})}{n^{c_1}(1-\sum_{a_{i1} = 0, i \in F}  \prod_{j = 2}^J \theta_j^{a_{ji}} - \sum_{a_{i1} = 0, i \notin F} n^{-\boldsymbol c'\boldsymbol a_i} \prod_{j = 2}^J \theta_j^{a_{ji}})} \\
&=&\left(1+\frac{ \sum_{a_{i1} = 1, i \notin F} n^{-\boldsymbol c'\boldsymbol a_i} \prod_{j = 2}^J \theta_j^{a_{ji}}}{\sum_{i \in F: \, a_{i1} = 1}   \prod_{j = 2}^J \theta_j^{a_{ji}} }\right) / \left(1 - \frac{\sum_{a_{i1} = 0, i \notin F} n^{-\boldsymbol c'\boldsymbol a_i} \prod_{j = 2}^J \theta_j^{a_{ji}}}{\sum_{i \in F: \, a_{i1} = 0}  \prod_{j = 2}^J \theta_j^{a_{ji}} } \right) > 1.
\end{eqnarray*}
Therefore, 
\begin{equation}\label{LLineqPr}
{l}(\boldsymbol q, \boldsymbol p^{(n)})  \leq {l}_F(\boldsymbol q_F, \boldsymbol p_F) \leq {l}_F(\boldsymbol q_F, \hat{\boldsymbol p}_{\boldsymbol y_F}).
\end{equation}
Combining (\ref{LLineqInt}) and (\ref{LLineqPr}),
\begin{equation}\label{LLineqBoth}
{l}(\boldsymbol q, \boldsymbol \delta^{(n)})  \leq {l}_F(\boldsymbol q_F, \boldsymbol \delta_F) \leq {l}_F(\boldsymbol q_F, \hat{\boldsymbol \delta}_{\boldsymbol y_F}),
\end{equation}
and 
$$ \sup_n{l}(\boldsymbol q, {\boldsymbol \delta}^{(n)}) \leq {l}_F(\boldsymbol q_F, \hat{\boldsymbol \delta}_{\boldsymbol y_F}) .$$ 
Hence, whenever $\tilde{\boldsymbol \delta}^{(n)} \to \tilde{\boldsymbol \delta}$ as $n \to \infty$, 
${l}(\boldsymbol q, \tilde{\boldsymbol \delta}^{(n)})  \to {l}_F(\boldsymbol q_F, \hat{\boldsymbol \delta}_{\boldsymbol y_F}).$

Therefore, ${l}(\boldsymbol q, {\tilde{\boldsymbol \delta}_{\boldsymbol y}}) = \sup {l}(\boldsymbol q, {\boldsymbol \delta}) = {l}_F(\boldsymbol q_F, \hat{\boldsymbol \delta}_{\boldsymbol y_F})$, which concludes the proof of  ({\it{ii}}). 
\qed

\vspace{3mm}
\noindent The uniqueness claim in ({\it{iii}}) follows from the convexity of the log-likelihood function. The proof is similar to the one given by \citet[Proposition 4.7]{Lauritzen} for the case of extended log-affine models, and is thus omitted. In order to prove the second claim, suppose first that there exists a facial set $F$ such that $supp(\boldsymbol q) \subseteq F$. Let $F$ be the minimal of such sets. As shown in the proof of ({\it{ii}}), the MLE $\hat{\boldsymbol \delta}_{\boldsymbol y_F}$  under  ${RM}_{\boldsymbol \delta_F}(\mathbf{A}_F)$ exists, and, from (\ref{MLEsys}),
$$\mathbf{A}_F \hat{\boldsymbol{\delta}}_{\boldsymbol y_F} = \gamma \mathbf{A}_F \boldsymbol q_F, \mbox{ for some } \gamma > 0, \mbox{ and, if } \boldsymbol \delta \equiv \boldsymbol p, \,\, \boldsymbol 1'\hat{\boldsymbol \delta}_{\boldsymbol y_F} = 1.$$ 
The MLE under $\,\xbar{RM}_{\boldsymbol \delta}(\mathbf{A})$ is equal to 
 $\tilde{\boldsymbol \delta}_{\boldsymbol y} = (\hat{\boldsymbol \delta}_{\boldsymbol y_F}, \boldsymbol 0_{\mathcal{I}\setminus F})$. As $\tilde{\boldsymbol \delta}_{\boldsymbol y, F, i} = 0$ for $i \notin F$,
$\mathbf{A}\hat{\boldsymbol{\delta}}_{\boldsymbol y} = \gamma \mathbf{A} \boldsymbol q$, and, in the case of probabilities, $\boldsymbol 1'\hat{\boldsymbol \delta}_{\boldsymbol y_F} = 1$.

If, for all facial sets $F$, $supp(\boldsymbol q) \not \subseteq F$, then the MLE $\tilde{\boldsymbol \delta}_{\boldsymbol y}$ under the extended model exists and is also the MLE under $RM_{\boldsymbol \delta}(\mathbf{A})$. In this case,  (\ref{MLEsys}) holds and is the same as (\ref{A}), which completes the proof.
\qed


\section*{Acknowledgments}

The second author was supported in part by Grant K-106154 from the Hungarian National Scientific Research Fund (OTKA).

\bibliographystyle{plainnat}
\bibliography{uwthesis1021}

\end{document}